\newif\ifISAAC
\author{Mark de Berg}{Department of Mathematics and Computer Science, TU Eindhoven, the Netherlands}{M.T.d.Berg@tue.nl}{https://orcid.org/0000-0001-5770-3784}{}
\author{Bart M.P.~Jansen}{Department of Mathematics and Computer Science, TU Eindhoven, the Netherlands}{B.M.P.Jansen@tue.nl}{https://orcid.org/0000-0001-8204-1268}{}
\author{Jeroen S.K.~Lamme}{Department of Mathematics and Computer Science, TU Eindhoven, the Netherlands}{J.S.K.Lamme@tue.nl}{https://orcid.org/0009-0005-8901-2271}{}
\title{Star-Based Separators for Intersection Graphs of $c$-Colored Pseudo-Segments}
\authorrunning{M.~de Berg, B.M.P.~Jansen, and J.S.K.~Lamme}
\keywords{Computational geometry, intersection graphs, biclique-based separators, distance oracles}
\definecolor{light-gray}{gray}{0.95}
\newcommand{\eps}{\varepsilon}
\newcommand{\etal}{\emph{et al.}\xspace}
\theoremstyle{plain}
\newenvironment{myquote}%
  {\list{}{\leftmargin=4mm\rightmargin=4mm}\item[]}%
  {\endlist}
\newenvironment{claiminproof}{\begin{myquote}\noindent\emph{Claim.}}{\end{myquote}}
\newenvironment{proofinproof}{\begin{myquote}\noindent\emph{Proof.}}{\hfill $\lhd$ \end{myquote}}
\newcommand{\C}{\ensuremath{\mathcal{C}}}
\newcommand{\G}{\ensuremath{\mathcal{G}}}
\newcommand{\cH}{\ensuremath{\mathcal{H}}}
\newcommand{\cP}{\ensuremath{\mathcal{P}}}
\newcommand{\T}{\ensuremath{\mathcal{T}}}
\newcommand{\REAL}{\ensuremath{\mathbb{R}}}
\newcommand{\Reals}{\REAL}
\renewcommand{\leq}{\leqslant}
\renewcommand{\geq}{\geqslant}
\renewcommand{\le}{\leqslant}
\newcommand{\bd}{\partial}
\newcommand{\graph}{\G}
\newcommand{\ig}{\graph^{\times}}
\newcommand{\sep}{S}
\newcommand{\actF}{F}
\newcommand{\inactF}{\overline{\actF}}
\newcommand{\mystar}{\mathrm{star}}
\newcommand{\mis}{{\sc Independent Set}\xspace}
\newcommand{\domset}{{\sc Dominating Set}\xspace}
\newcommand{\qcol}{$q$-{\sc Coloring}\xspace}
\newcommand{\fvs}{{\sc Feedback Vertex Set}\xspace}
\newcommand{\seg}{\mathrm{seg}}
\newcommand{\hd}{\mbox{\sc hd}}
\newcommand{\vd}{\mbox{\sc vd}}
\begin{document}
\maketitle

\begin{abstract}
The Planar Separator Theorem, which states that any planar graph $\graph$  has a separator 
consisting of~$O(\sqrt{n})$ nodes whose removal partitions $\graph$ into components of size
at most $\tfrac{2n}{3}$, is a widely used tool to obtain fast algorithms on planar graphs. 
Intersection graphs of disks, which generalize planar graphs, do not admit such separators. 
It has recently been shown that disk graphs \emph{do} admit so-called clique-based separators 
that consist of $O(\sqrt{n})$ cliques. This result has been generalized to intersection graphs
of various other types of disk-like objects. Unfortunately, segment intersection graphs 
do not admit small clique-based separators, because they can contain arbitrarily large bicliques.
This is true even in the simple case of axis-aligned segments.

In this paper we therefore introduce \emph{biclique-based separators}
(and, in particular, \emph{star-based separators}), which are separators consisting of a small
number of bicliques (or stars). We prove that any $c$-oriented set of $n$ segments 
in the plane, where $c$ is a constant, admits a star-based separator consisting of $O(\sqrt{n})$ stars. 
In fact, our result is more general, as it applies to any set of~$n$ pseudo-segments that
is partitioned into $c$ subsets such that the pseudo-segments in the same subset are pairwise disjoint.
We extend our result to intersection graphs of $c$-oriented polygons.
These results immediately lead to an almost-exact distance oracle for such intersection graphs,
which has~$O(n\sqrt{n})$ storage and~$O(\sqrt{n})$ query time, and that can report
the hop-distance between any two query nodes in the intersection graph with an additive error of at most~2.
This is the first distance oracle for such types of intersection graphs that has
subquadratic storage and sublinear query time and that only has an additive error.
\end{abstract}

\section{Introduction}
\label{sec:intro}

\subparagraph{Background.}
The celebrated Planar Separator Theorem by Lipton and Tarjan~\cite{planarseparator} states 
that any planar graph admits a balanced separator of size $O(\sqrt{n})$. More precisely, 
for any planar graph~$G=(V,E)$ with $n$ nodes there exists a set~$S\subseteq V$ of size $O(\sqrt{n})$ 
whose removal partitions $G$ into components of at most $\tfrac{2n}{3}$ nodes each. 
This fundamental tool has been used to develop efficient algorithms for many classic problems on planar graphs. 

Geometric intersection graphs---graphs whose nodes correspond to objects in the plane and 
that have an edge between two nodes iff the corresponding objects intersect---are a generalization 
of planar graphs that have received widespread attention, in computational geometry, graph theory, and
parametrized complexity. (For an overview of work in the latter area, we refer the reader to the survey by Xue~and~Zehavi~\cite{XUE2025100796}.)
Unfortunately, geometric intersection graphs 
do not admit balanced separators of sublinear size, because they can contain arbitrarily large cliques.
This led De~Berg~\etal~\cite{bbkmz-ethf-20} to introduce so-called \emph{clique-based separators}:
balanced separators that consist of a small number of disjoint (but potentially large) cliques instead of a small number of nodes.
They proved that any disk graph---and more generally, any intersection graph of convex fat 
objects in the plane---admits a clique-based separator of size $O(\sqrt{n})$. Here the
\emph{size} of a clique-based separator is the number of cliques it consists of.
They also showed how to use such clique-based separators to obtain sub-exponential algorithms 
for various classic graph problems, including \mis, \domset, and \fvs.
Recently, De Berg~\etal~\cite{bkmt-cbsgis-23} showed that 
intersection graphs of pseudo-disks, and intersection graphs
of geodesic disks inside a simple polygon, admit balanced separators 
consisting of $O(n^{2/3})$ cliques, and Aronov~\etal~\cite{abt-cbsgd-24}
proved that intersection graphs of geodesic disks in any well-behaved 
metric in the plane admit balanced separators consisting of $O(n^{3/4+\eps})$ cliques.
\medskip

One may wonder if \emph{all} geometric intersection graphs have sublinear clique-based separators.
Unfortunately the answer is no, even for intersection graphs of 
horizontal and vertical line segments. The problem is that such graphs
can contain arbitrarily large bicliques, and $K_{n,n}$ does not admit a
sublinear clique-based separator. We therefore introduce 
\emph{biclique-based separators}, which are separators consisting of bicliques,
and we show that any set of horizontal and vertical segments admits a balanced 
separator consisting of a small number of bicliques. In fact, our result
is stronger (as it uses \emph{star graphs} in the separator, and not just any biclique)
and it applies to a much wider class of intersection graphs, as discussed next.

\subparagraph{Our contribution.}
Let $\G=(V,E)$ be an undirected graph with $n$ nodes. 
A collection~$\sep=\{\sep_1,\ldots\sep_t\}$ of (not necessarily induced) disjoint subgraphs from $G$ is called a 
balanced\footnote{In the sequel we will often omit the adjective
\emph{balanced} and simply speak of \emph{separators}.}
\emph{biclique-based separator} for $G$ if it has the following properties:
\begin{itemize}
\item Each subgraph $\sep_i$ is a biclique.
\item The removal from $\G$ of all subgraphs $\sep_i$ and their incident edges partitions $\G$
      into connected components with at most $\frac{2n}{3}$ nodes each.
\end{itemize}
The \emph{size} of a biclique-based separator is the number of bicliques it is comprised of.
If each biclique~$\sep_i\in\sep$ is a star, then we call $\sep$ a \emph{star-based separator}.
Note that the subgraphs $\sep_i$ in a biclique-based separator (or: in a star-based separator)
need not be induced subgraphs of~$\G$. This is necessary to be able to handle large cliques.
To avoid confusion between our new separators that are comprised of bicliques and the traditional 
separators that are comprised of individual nodes, we will refer to the latter as \emph{node-based separators}. 
\medskip

We denote the intersection graph induced by a set $V$ of $n$ objects in the plane by~$\ig[V]$.
Thus, the nodes in $\ig[V]$ are in one-to-one correspondence with the objects in~$V$
and there is an edge between two objects $u,v\in V$ iff $u$ intersects~$v$.
In Section~\ref{sec:separator} we prove that a star-based separator of size $O(\sqrt{n})$ 
exists for the intersection graph of any set~$V$ of axis-parallel segments;
the bound on the size of the separator is tight in the worst case.
In fact, we will prove that a star-based separator of size $O(\sqrt{n})$ exists
for any set $V$ of pseudo-segments\footnote{A set $V$ of curves in the plane is a 
set of pseudo-segments if any two curves in $V$ are either disjoint or intersect in a single point 
that is a proper crossing (and not a tangency).} 
that is partitioned into subsets $V_1,\ldots,V_c$ such that the pseudo-segments
from each $V_i$ are disjoint from each other.
In other words, each $V_i$ is an independent set in $\ig[V]$.
We call such a set $V$ a \emph{$c$-colored set} of pseudo-segments,
and we call $V_1,\ldots,V_c$ its color classes; see Figure~\ref{fig:overview}(left). 
Note that a set of axis-parallel segments such that no two segments of the same orientation
intersect, is a $2$-colored set of pseudo-segments. More generally, a~$c$-oriented set of line 
segments such that no two segments from the same orientation intersect, is a
$c$-colored set of pseudo-segments. Intersection graphs of $c$-oriented segments
are often referred to as {\sc $c$-dir} graphs, and when segments of the same
orientation are not allowed to intersect, they are referred to as {\sc pure $c$-dir} graphs~\cite{KRATOCHVIL1994233,Kratochvíl1990}.  In Section \ref{sec:algorithms} we show that a star-based separator of size $O(\sqrt{n})$ for a $c$-{\sc dir} graph can be computed in $O(n \log n)$ time, if the segments which induce this graph are given. 

In Section~\ref{sec:polygons-and-strings} we extend our result to the case where $V$ 
is a set of $n$ constant-complexity~$c$-oriented
polygons, that is, a set of polygons such that the set of
edges of all polygons is a~$c$-oriented set. Note that two polygons
can intersect without having their boundaries intersect, namely when one
polygon is completely contained in the outer boundary of the other polygon---this is the main
difficulty we need to handle when extending our results to polygons. 
In Section \ref{sec:algorithms} we show that this separator can also be computed in $O(n \log n)$ time.
Finally, in Section~\ref{sec:polygons-and-strings} we also present a straightforward
greedy algorithm that computes a star-based separator of size $O(n^{2/3} \log ^{2/3} n)$ 
for any string graph. 

\medskip
\noindent\emph{Application to distance oracles.}
A \emph{distance oracle} for a (potentially weighted) graph $\graph=(V,E)$ is a data structure that can
quickly report the distance between two query nodes~$s,t\in V$. Such queries
can trivially be answered in $O(1)$ time if we store the distance between
any two nodes in a distance table, but this requires $\Omega(n^2)$ storage. 
The challenge is to design distance oracles that use subquadratic storage. 
Unfortunately, this is not possible in general: any distance oracle must use 
$\Omega(n^2)$ bits of storage in the worst case, irrespective of the query 
time~\cite{DBLP:journals/jacm/ThorupZ05}. This is even true for distance
oracles that approximate distances to within a factor strictly less than~$3$.
Thus, work on distance oracles  concentrated on special graph classes and,
in particular, on planar graphs. More than two decades of research culminated in an 
exact distance oracle for weighted planar graphs that uses $O(n^{1+o(1)})$ storage 
and has~$O(\log^2n)$ query time~\cite{DBLP:journals/jacm/CharalampopoulosGLMPWW23}. 
For the unweighted case---in other words, if we are interested in the \emph{hop-distance}---there 
is a $(1+\eps)$-approximate distance oracle with $O(1/\eps^2)$ query time and~$O(n/\eps^2)$ storage~\cite{DBLP:conf/focs/LeW21}. See the survey by Sommer~\cite{DBLP:journals/csur/Sommer14}
and the paper by Charalampopoulos~\etal~\cite{DBLP:journals/jacm/CharalampopoulosGLMPWW23} for overviews
of the existing distance oracles for various graph classes.

For geometric intersection graphs, only few results are known. 
Gao and Zhang~\cite{gao-zhang}, and Chan and Skrepetos~\cite{DBLP:journals/jocg/ChanS19a},
provide $(1+\eps)$-approximate distance oracles with $O(n\log n)$ storage
and $O(1)$ query time for weighted unit-disk graphs.
No exact distance oracles 
that use subquadratic storage and have sublinear
query time are known, even for unweighted unit-disk graphs.
Very recently, Chang, Gao, and Le~\cite{DBLP:conf/compgeom/Chang0024}
presented an \emph{almost exact} distance-oracle for unit-disk graphs
(and, more generally, for intersection graphs of similarly sized,
convex, fat pseudodisks) that uses $O(n^{2-1/18})$ storage
and that can report the distance between two query nodes,
up to an additive error\footnote{In the most recent arxiv version~\cite{DBLP:journals/corr/abs-2401-12881}
the error has been reduced to 1.} of~2, in $O(1)$ time.
Another recent result is by Aronov, De~Berg, and Theocharous~\cite{abt-cbsgd-24},
who presented an almost exact distance oracle that uses clique-based separators.
For intersection graphs of geodesic disks in the plane,
their oracle uses $O(n^{7/4+\eps})$ storage, has $O(n^{3/4+\eps})$ query time,
and can report the hop-distance between two query points up to an 
additive error of~$1$. For Euclidean disks, the storage and preprocessing would be $O(n\sqrt{n})$
and $O(\sqrt{n})$, respectively. As we explain later,
their approach also works with biclique-based separators; the only
difference is that the additive error increases from~$1$ to~$2$.
Thus, we obtain an almost exact distance oracle for intersection graphs of
$c$-colored pseudo-segments with $O(n\sqrt{n})$ storage and 
$O(\sqrt{n})$ query time. This is, to the best of our knowledge, the first
almost exact distance oracle for intersection graphs of non-disk-like objects.

\section{A star-based separator for $c$-colored pseudo-segments}
\label{sec:separator}
Let $V$ be a $c$-colored set of pseudo-segments, as defined above.
To simplify the terminology, from now on we simply refer to the pseudo-segments 
in $V$ as \emph{segments}. 
We assume that the segments in $V$ are in general position and,
in particular, that no three segments meet in a common point 
and that no endpoint of one segment lies on another segment. 
This assumption is without loss of generality,
as it can always be ensured by perturbing the segments slightly.

\subparagraph{The construction.}
Recall that a \emph{contact graph} is the intersection graph of a set of interior-disjoint objects. 
Contact graphs of curves are known to be planar if no four objects meet in a common point~\cite[Lemma 2.1]{HLINENY199859}.
Our strategy to construct a star-based separator~$\sep$ for~$\ig[V]$ consists of the following steps, illustrated in Figure~\ref{fig:overview} and explained in more detail later.
\begin{figure}[t]
\begin{center}
\includegraphics{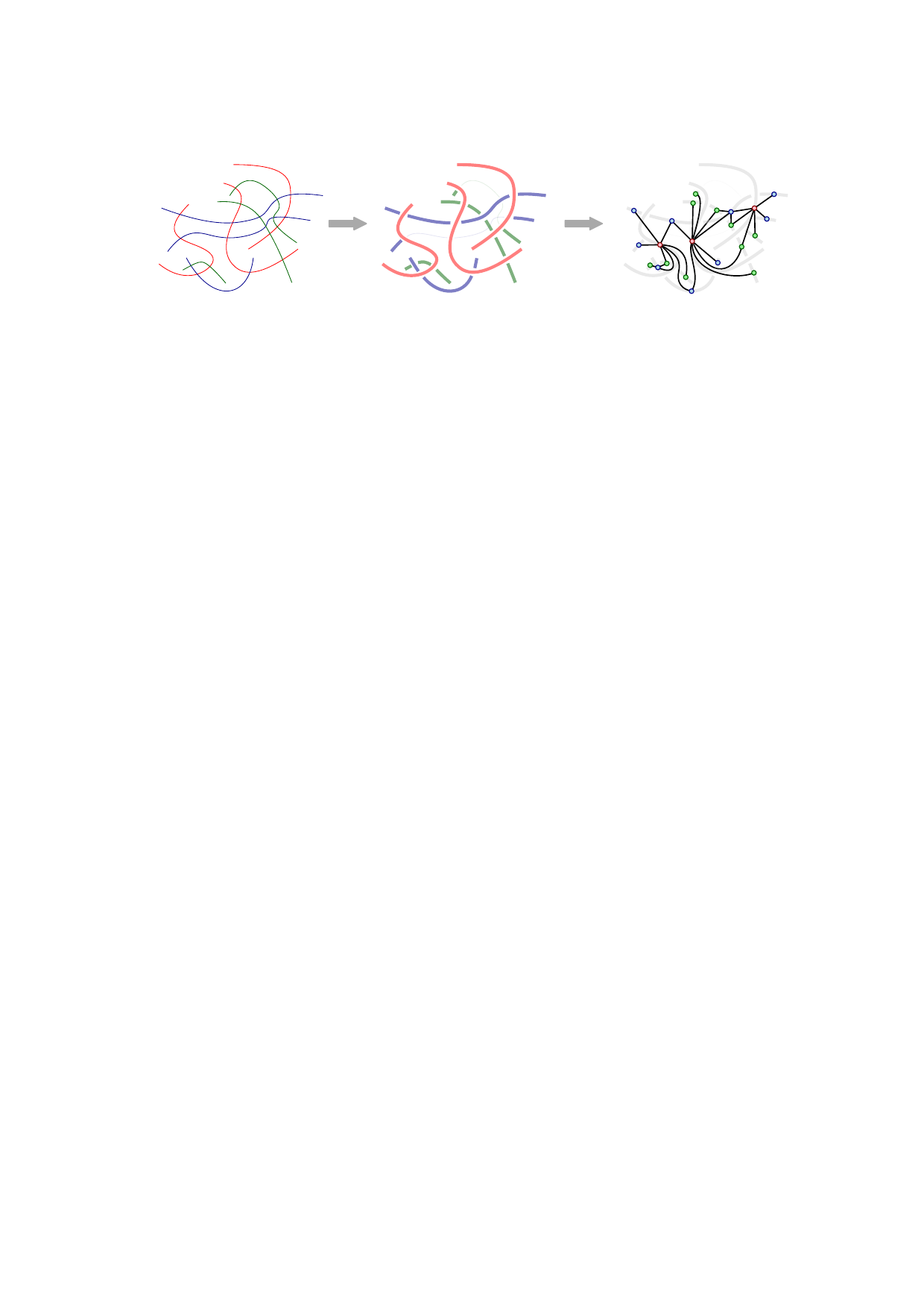}
\end{center}
\caption{Left: A $3$-colored set of pseudo-segments. Middle: The active fragments created by our algorithm. Right: The contact graph induced by the active fragments. }
\label{fig:overview}
\end{figure}
\smallskip
\begin{description}
\item[\rm\emph{Step~1.}] We partition each segment in~$V$ into \emph{fragments}. Some fragments will
      be \emph{active}, while others will be \emph{inactive}. This partition will be such that 
      active segments do not cross each other, although they may touch.
\item[\rm\emph{Step~2.}] Let $\cH$ be the contact graph on the active segments. 
      We construct a separator~$\sep_{\cH}$ on~$\cH$,
      using a suitable weighting scheme on the nodes of~$\cH$.
      Because $\cH$ is planar, constructing the separator can be done using the Planar Separator Theorem.
\item[\rm\emph{Step~3.}] We use $\sep_{\cH}$ to construct our star-based separator~$\sep$ for $\ig[V]$. For a fragment $f$, let~$\seg(f)\in V$ denote the segment containing~$f$.
      Intuitively, we want to put a star into $\sep$ for each fragment~$f$ in the separator
      $\sep_{\cH}$, namely, the star consisting of the segment~$\seg(f)$ as well as all other segments intersecting~$\seg(f)$. For technical reasons, however,
      we actually have to put a slightly larger collection of stars into~$\sep$.
\end{description}
\smallskip
To make this strategy work, we need to control the size of the contact graph~$\cH$.
More precisely, to obtain a star-based separator of size $O(\sqrt{n})$,
the size of $\cH$ needs to be~$O(n)$. Thus we cannot, for example,
cut each segment into fragments at its intersection points with all other segments and make all
the resulting fragments active. On the other hand, if we ignore certain parts of the segments
by making them inactive, we miss certain intersections and we run the risk that our final set 
of stars is no longer a valid separator. Next, we describe
how to overcome these problems by carefully creating the active fragments.

\subparagraph{\em Step~1: Creating the active fragments.}
To construct the active fragments that form the nodes in our contact graph~$\cH$,
we will go over the subsets~$V_1,\ldots,V_c$ one by one. We denote the active and
inactive fragments created for a subset~$V_i$ by $\actF_i$ and $\inactF_i$, respectively.
For~$1\leq i\leq c$, we define~$\actF_{\leq i} := \actF_1 \cup\cdots\cup \actF_i$, and 
we define $\actF_{<i}$ similarly. 
\medskip

Handling the first subset~$V_1$ is easy: we simply define $\actF_1 := V_1$. In other words,
each segment in $V_1$ becomes a single fragment, and all these fragments are active.

Now consider a subset~$V_i$ with~$i>1$. Each segment $v\in V_i$ is partitioned into
one or more fragments by cutting it at every intersection point of~$v$ with an active
fragment~$f\in \actF_{<i}$. Let~$X_i$ be the set of fragments thus created. 
There are two types of fragments in~$X_i$: fragments~$f$ that contain 
an endpoint of the segment $v\in V_i$ contributing~$f$---there are at least one and at most two
of these fragments per segment~$v\in V_i$---and fragments that do not contain such an endpoint. We call fragments of the former type \emph{end fragments} and
fragments of the latter type \emph{internal fragments}.
Note that an internal fragment has its endpoints on two distinct active fragments 
$g,g'\in\actF_{<i}$. We then say that $f$ \emph{connects} $g$ and~$g'$.

Now that we have defined $X_i$, we need
to decide which fragments in~$X_i$ become active. To avoid making too many
fragments active, we will partition~$X_i$ into equivalence classes, and we will activate
only one fragment from each equivalence class. To define the equivalence classes
we first define, for two internal fragments $f,f'\in X_i$ that connect the same 
pair of fragments $g,g'\in \actF_{<i}$, a region $Q(f,f',g,g')$, as follows;
see Figure~\ref{fig:def-Q}(i) for an illustration. 

First, suppose that the segments $g$ and $g'$ do not touch each other, as in the left part of Figure~\ref{fig:def-Q}(i).
Thus, $\Reals^2 \setminus (g \cup g')$ is a single, unbounded region with two ($1$-dimensional) holes, namely $g$ and $g'$. 
The fragment $f$ connects these two holes, and so $\Reals^2 \setminus (g \cup g' \cup f)$ 
is still a single unbounded region, but now with one hole. Removing $f'$ 
from this region splits it 
into two regions, one bounded and one unbounded. We define $Q(f,f',g,g')$ to be the bounded region.
Now suppose that $g$ and $g'$ touch each other, say at an endpoint of $g$. We slightly shrink $g$ 
at the point where it touches $g'$, and then define $Q(f,f',g,g')$ as above. Note that in this case 
$Q(f,f',g,g')$ may consist of one or two bounded regions, if we undo the shrinking process; 
see the middle and right part in Figure~\ref{fig:def-Q}(i). 
We can now define the equivalence classes.
\begin{figure}
\begin{center}
\includegraphics{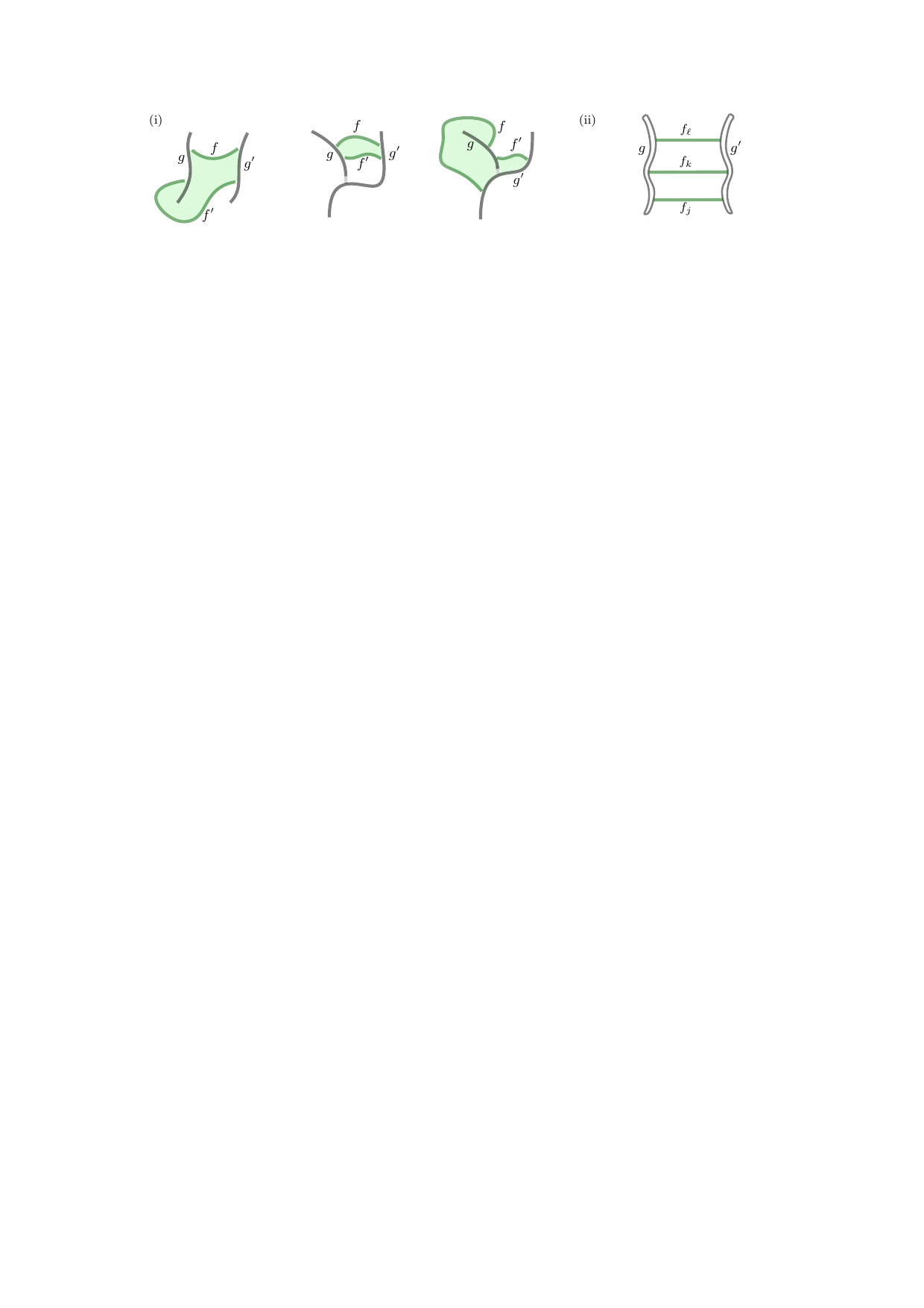}
\end{center}
\caption{(i) Defining the region $Q(f,f',g,g')$.  (ii) Illustration for the proof of Lemma \ref{lem:equivalence}.} 
\label{fig:def-Q}
\end{figure}
\begin{definition}\label{def:equivalence}
Let $f$ and $f'$ be two fragments in~$X_i$. We say that $f$ and $f'$ are \emph{equivalent},
denoted by $f\equiv f'$, if $f=f'$ or the following two conditions hold.
\begin{enumerate}[(i)]
\item The fragments $f$ and $f'$ are internal and connect the same pair of fragments~$g,g' \in\actF_{<i}$.
\item The region~$Q(f,f',g,g')$ enclosed by the fragments $f,f',g,g'$ does not contain
      an endpoint of any segment in~$V$.  
\end{enumerate}
\end{definition}

The following lemma shows that~$\equiv$ is indeed an equivalence relation.

\begin{lemma}\label{lem:equivalence}
    The relation~$\equiv$ defined in Definition \ref{def:equivalence} is an equivalence relation.
\end{lemma}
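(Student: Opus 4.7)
The plan is to verify the three defining properties of an equivalence relation in turn. Reflexivity is built directly into the definition via the clause ``$f=f'$'', so there is nothing to check. Symmetry should reduce to observing that the region $Q(f,f',g,g')$ is defined purely in terms of the four curves $f,f',g,g'$, without any asymmetric role for $f$ versus $f'$; formally, removing $f'$ from the once-punctured region $\Reals^2\setminus(g\cup g'\cup f)$ produces the same bounded component(s) as removing $f$ from $\Reals^2\setminus(g\cup g'\cup f')$, so $Q(f,f',g,g')=Q(f',f,g,g')$. Symmetry then follows since condition (i) is clearly symmetric in $f$ and $f'$.

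The substantive work is transitivity. Suppose $f\equiv f'$ and $f'\equiv f''$ with all three fragments distinct. Both equivalences force $f'$ to be internal and to connect a fixed pair $g,g'\in\actF_{<i}$, and moreover force $f$ and $f''$ to be internal and to connect the same pair $g,g'$. Hence condition (i) holds for the pair $(f,f'')$, and the only thing left to prove is that $Q(f,f'',g,g')$ contains no endpoint of any segment in $V$.

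The geometric key is that $f,f',f''$ are three pseudo-segments running between $g$ and $g'$, and by the pseudo-segment property any two of them are disjoint (since they all belong to the same color class $V_i$, whose fragments are pairwise non-crossing). Together with arcs of $g$ and $g'$ they therefore form a planar configuration in which, up to relabeling, one of the three fragments lies ``between'' the other two. I will use this to establish the containment
\[
Q(f,f'',g,g')\ \subseteq\ Q(f,f',g,g')\ \cup\ Q(f',f'',g,g')\ \cup\ f'.
\]
Concretely, if $f'$ is the middle fragment, then $Q(f,f'',g,g')$ is glued from the two smaller regions along $f'$; if instead $f$ (respectively $f''$) is in the middle, then $Q(f,f'',g,g')$ is contained in $Q(f',f'',g,g')$ (respectively $Q(f,f',g,g')$), which is even stronger. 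A small amount of care is needed when $g$ and $g'$ touch, so that I should apply the shrinking convention from the definition of $Q$ before carrying out this set-theoretic argument; this is the step I expect to be the most delicate, as one must be sure that the shrinking is consistent for all three regions simultaneously.

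Once the containment is established, the conclusion is almost immediate: the two $Q$-regions on the right-hand side contain no endpoint of any segment in $V$ by hypothesis, and $f'$ itself contains no such endpoint either, because $f'$ is an \emph{internal} fragment of some segment $v'\in V_i$ (so neither endpoint of $v'$ lies in $f'$) and the general position assumption rules out endpoints of other segments lying on~$v'$. Therefore $Q(f,f'',g,g')$ contains no endpoint of any segment of $V$, which is condition (ii) for the pair $(f,f'')$, completing the proof.
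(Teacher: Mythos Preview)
Your proposal is correct and follows essentially the same approach as the paper: both argue that among the three fragments one lies ``between'' the other two, and use this to show $Q(f,f'',g,g')\subseteq Q(f,f',g,g')\cup Q(f',f'',g,g')$ (the paper phrases this topologically by inflating $g,g'$ into closed curves and noting that two of the three fragments are incident to the unbounded face). Your version is slightly more careful in one respect: you explicitly include $f'$ in the union and argue separately that it carries no segment endpoint, whereas the paper tacitly writes $Q_{j\ell}=Q_{jk}\cup Q_{k\ell}$ without the middle curve; your extra care here is justified, since the $Q$-regions are open.
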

\begin{proof}
It is clear that $\equiv$ is reflexive and symmetric. It remains to show that if~$f_1\equiv f_2$
and~$f_2\equiv f_3$, then~$f_1\equiv f_3$. Let $g,g'$ be the fragments connected by~$f_1,f_2,f_3$.
We can assume that $g$ and $g'$ do not touch; otherwise, as before, we can
shrink one of the fragments so that the arguments below still apply.
It is instructive to view $g$ and $g'$ as being slightly inflated so that they become closed
curves and no longer have two ``sides''. We then see that, from a topological point of view,
the situation is always as in Figure~\ref{fig:def-Q}(ii): two of the fragments,~$f_j$ and~$f_\ell$,
are incident to the unbounded face, while the third fragment~$f_k$ is not. Thus, if we define
$Q_{jk} := Q(f_j,f_k,g,g')$ and we define $Q_{k\ell}$ and $Q_{j\ell}$ similarly, then~$Q_{j\ell} = Q_{jk}\cup Q_{k\ell}$. This implies that, no matter which of the three fragments
$f_j,f_k,f_\ell$ is $f_2$, we always have~$Q_{13} \subseteq Q_{12}\cup Q_{23}$.
Since $Q_{12}$ and $Q_{23}$ do not contain endpoints of segments in $V$ if  $f_1\equiv f_2$
and $f_2\equiv f_3$, neither does~$Q_{13}$. Thus, $f_1\equiv f_3$.
\end{proof}


We now partition $X_i$ into equivalence classes according to the relation~$\equiv$ defined above.
For each equivalence class, we make an arbitrary fragment $f\in X_i$ from that class active
and put it into~$\actF_i$; the other fragments from that equivalence class are
made inactive and put into~$\inactF_i$. Note that end fragments 
are always active, since they do not connect a pair of 
fragments from~$\actF_{<i}$ and thus cannot be equivalent to any other fragment.

\subparagraph{\em Step~2: Creating the contact graph~$\cH$ and its separator~$\sep_{\cH}$.}
Let $\actF := \actF_1 \cup\cdots \cup \actF_c$ be the set of active fragments created in Step~1,
and let $\inactF := \inactF_1 \cup\cdots \cup \inactF_c$ be the inactive fragments.
We define $\cH=(\actF,E_{\cH})$ to be the contact graph of~$\actF$. More precisely, for
two fragments~$f,f\in \actF$ we put the edge $(f,f')$ in $E_{\cH}$ iff $f$ and $f'$ are 
in contact---that is,~$f\cap f'\neq\emptyset$---and they do not belong\footnote{Since we 
do not put an edge between fragments belonging to the same segment, even if
these fragments touch, $\cH$ is formally speaking not a contact graph, but we permit ourselves this abuse of terminology.} to the same
segment in~$V$; see Figure~\ref{fig:overview}. Because of our general position assumption,
no three fragments from different segments meet in a point, and therefore~$\cH$ is planar~\cite[Lemma 2.1]{HLINENY199859}.
From now on, with a slight abuse of notation, we will not make a distinction between
the nodes in $\cH$ and the corresponding fragments in~$\actF$.

We now wish to create a separator $\sep_{\cH}$ for $\cH$. In Step~3 we will use $\sep_{\cH}$
to create a separator $\sep$ for $\ig[V]$. To ensure that $\sep$ will be balanced,
we will put weights on the nodes in $\cH$ and use a weighted version of the
Planar Separator Theorem, as described next.

For each segment $s\in V$, we designate one of its end fragments---recall that
end fragments are always active---as its \emph{representative fragment}. 
We give all representative fragments a weight of $\tfrac{1}{n}$, and all other 
fragments a weight of~$0$. Note that the total weight of the fragments in $\actF$ is~$1$. 
We apply the weighted separator theorem given below to $\cH$. This gives us a 
separator~$\sep_{\cH}$, and parts $A_{\cH}, B_{\cH} \subseteq \actF\setminus \sep{_\cH}$
such that there are no edges between $A_{\cH}$ and~$B_{\cH}$.
\begin{lemma}[Theorem 4 in \cite{planarseparator}]\label{lem: weighted seperator}
     Let $G$ be a non-negatively weighted planar graph containing~$n$ nodes whose weights sum up to at most~$1$. The node set $V_G$ can be partitioned into a separator~$\sep$, and sets $A$ and $B$ such that $|\sep| = O(\sqrt{n})$, no edge connects $A$ and $B$, and the total weight of the nodes in $A$, as well as the total weight of the nodes in $B$, is at most $\tfrac{2}{3}$.
\end{lemma}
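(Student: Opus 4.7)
The statement is a direct quotation of the weighted planar separator theorem of Lipton and Tarjan, so my ``proof'' is really just a pointer to their paper. For the sake of a sketch, here is the plan one would follow to reprove it from scratch.

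First, I would fix a planar embedding of $G$ and an arbitrary root vertex $r$, and run breadth-first search to obtain levels $L_0, L_1, \ldots$. Since the total weight is at most $1$, a prefix-sum argument on the weights yields a ``middle'' level $L_m$ such that the total weight on levels strictly above $L_m$ and on levels strictly below $L_m$ is each at most $\tfrac{1}{2}$.

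Second, by averaging over the sizes $|L_j|$ (which sum to $n$), I would locate two levels $L_p$ (with $p \le m$) and $L_q$ (with $q \ge m$) of size $O(\sqrt{n})$ each, and close enough to $L_m$ that only $O(\sqrt{n})$ vertices lie strictly between them. Deleting $L_p \cup L_q$ splits $G$ into three weighted pieces: above $L_p$, the middle slab, and below $L_q$. If none of these pieces carries weight exceeding $\tfrac{2}{3}$, I take $\sep = L_p \cup L_q$ and obtain $A,B$ by grouping the two lighter pieces together.

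The remaining case is that the middle slab is heavy. I would contract everything above $L_p$ to a single super-vertex, obtaining a planar graph in which every vertex lies within $O(\sqrt{n})$ BFS steps from the contracted root, and then apply the classical fundamental-cycle argument: triangulate, take a spanning tree of this bounded depth, and find a non-tree edge whose fundamental cycle $C$ has length $O(\sqrt{n})$ and splits the interior weight in a balanced way. Unioning $C$ with $L_p \cup L_q$ gives a separator of size $O(\sqrt{n})$. The real obstacle in a from-scratch proof is this cycle-separator step; here it is absorbed into the citation of~\cite{planarseparator}.
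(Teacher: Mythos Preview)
The paper does not prove this lemma at all; it simply cites Theorem~4 of Lipton and Tarjan~\cite{planarseparator} and moves on. Your proposal correctly identifies this, so in that sense you match the paper exactly.

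Your supplementary sketch is broadly on the right track, but one sentence is off: the averaging argument does \emph{not} guarantee that only $O(\sqrt{n})$ \emph{vertices} lie strictly between $L_p$ and $L_q$; it only guarantees that there are $O(\sqrt{n})$ \emph{levels} between them (the middle slab can still contain $\Theta(n)$ vertices). Fortunately you do not actually use the false version later --- your contraction step relies only on the depth bound, which is the correct invariant --- so the sketch recovers. If you want the sketch to stand on its own, replace ``only $O(\sqrt{n})$ vertices lie strictly between them'' with ``at most $O(\sqrt{n})$ BFS levels lie strictly between them.''
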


\subparagraph{\em Step~3: Creating the star-based separator $\sep$ for~$\ig[V]$.}
Using the separator~$\sep_{\cH}$ created in Step~2, we now create our star-based
separator $\sep$ for the intersection graph~$\ig[V]$. We do this
by putting one or three stars into~$\sep$ for each fragment $f\in \sep_{\cH}$, as follows.
For a segment~$s\in V$, define $\mystar(s)$ to be the subgraph of $\ig[V]$ consisting
of $s$ and all its incident edges. Thus, the nodes in $\mystar(s)$ are the segment $s$ itself
plus the segments $s'\in V$ that intersect~$s$.
\begin{itemize}
\item If $f\in \sep_{\cH}$ is an end fragment then
      we put $\mystar(\seg(f))$ into~$\sep$.
\item If $f\in \sep_{\cH}$ is an internal fragment then let $g,g'\in F$
      be the pair of active fragments connected by~$f$.
      We put $\mystar(\seg(f))$, $\mystar(\seg(g))$, and $\mystar(\seg(g'))$ into~$\sep$.
\end{itemize}
Note that multiple copies of the same star can be added to $\sep$.
We remove these duplicates to ensure that all star graphs in $\sep$ have unique centers.
It can still be the case that several stars in $\sep$ contain the same node. 
To make the stars pairwise disjoint, we therefore remove non-center nodes 
until every node appears in at most one star in~$\sep$.

\subparagraph{The analysis.}
We now show that the construction described above yields a balanced separator of the
required size. This requires proving two things: 
the \emph{separation property}, namely that the removal
of $\sep$ partitions $\ig[V]$ into components of size at most $\tfrac{2n}{3}$,
and the \emph{size property}, namely that $\sep$ consists of~$O(\sqrt{n})$ stars.

\begin{figure}[t]
\begin{center}
\includegraphics{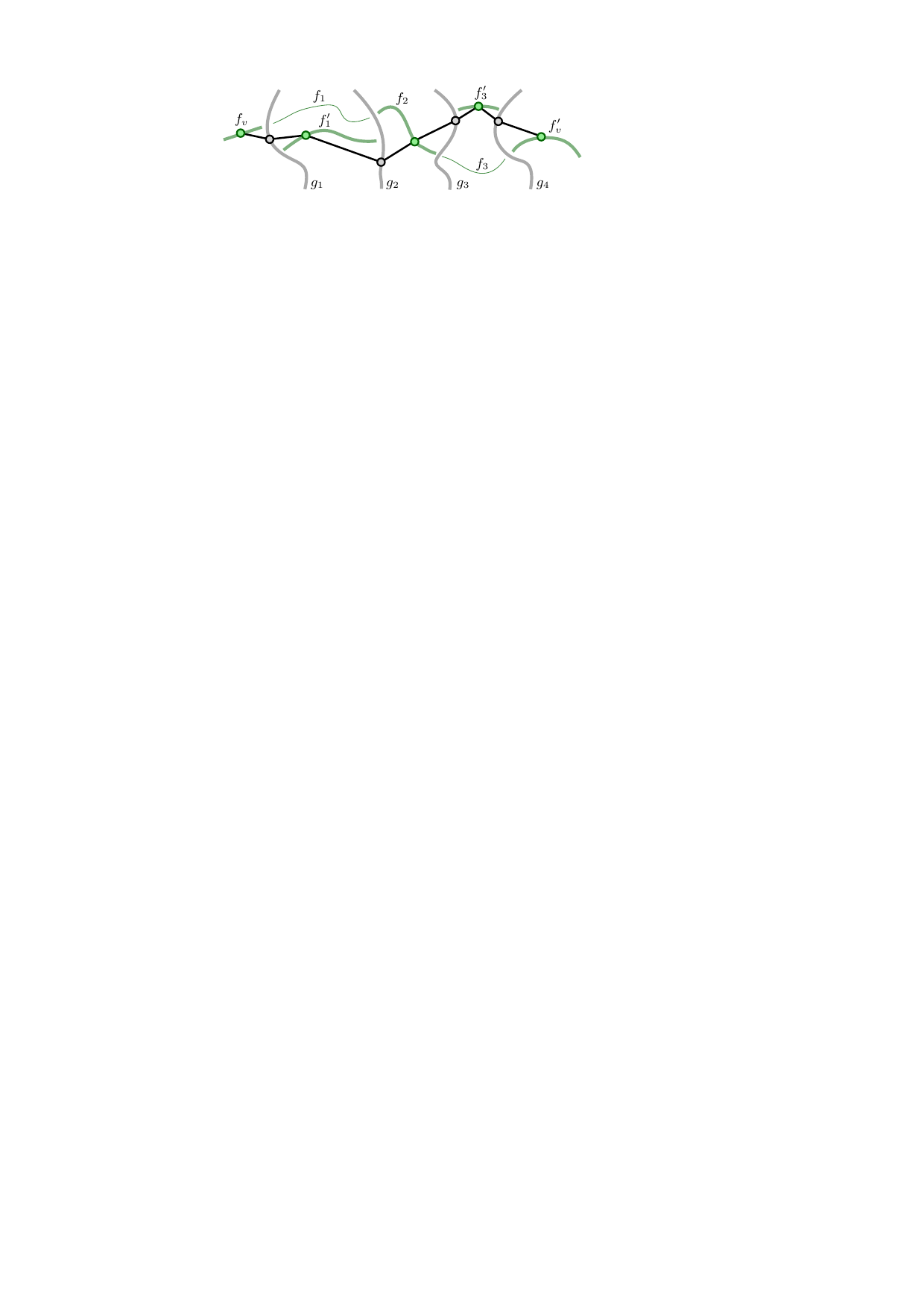}
\end{center}
\caption{The various fragments used in the proof of Lemma~\ref{lem: assignment properties of blocks}.
        Note that the segments $\seg(f'_1)$ and $\seg(f'_3)$ are not drawn in their entirety.
        The path in $\cH$ from $f_v$ to $f'_v$ is also shown.}
\label{fig:path}
\end{figure}
\subparagraph{\em Proving the separation property.}
To prove the separation property, it suffices to show that~$V\setminus \sep$ can be 
partitioned\footnote{Formally, we should have written $V\setminus \bigcup \sep$
instead of $V\setminus \sep$, since $\sep_{\cH}$ is a set of stars and not a set of nodes,
but we prefer the simpler (though technically incorrect) notation. }
into subsets $A$ and $B$ such that $|A|\leq \tfrac{2n}{3}$ and $|B|\leq \tfrac{2n}{3}$, 
and such that no segment in $A$ intersects any segment in~$B$. 

We define the sets $A$ and $B$ as follows. For each segment $v \in V$ not contained in 
a star in $\sep$ we look at its representative fragment~$f_v$.
Note that $f_v$ must be contained in either $A_{\cH}$ or~$B_{\cH}$, since $f_v\in\sep_{\cH}$
would imply that $v$ is contained in a star in~$\sep$. 
If $f_v \in A_{\cH}$ then we add $v$ to $A$, else we add $v$ to~$B$. 
The value $\tfrac{|A|}{n}$ can be at most the total weight of fragments in $A_{\cH}$, and
a similar statement holds for $B$. Hence, the next observation follows from the fact 
that $\sep_{\cH}$ is a balanced separator.
\begin{observation}\label{obs:size-of-parts}
$|A|\leq \tfrac{2n}{3}$ and $|B|\leq \tfrac{2n}{3}$.  
\end{observation}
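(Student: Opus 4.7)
The plan is to derive the observation directly from the weighted separator guarantee in Lemma~\ref{lem: weighted seperator}, using the fact that the weighting scheme in Step~2 was specifically rigged so that the weight of a set of fragments encodes (up to the factor $1/n$) the number of segments it represents.

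First I would fix the correspondence. Every segment $s \in V$ has exactly one designated representative fragment, which carries weight $\tfrac{1}{n}$; all other fragments carry weight $0$. Hence the assignment $v \mapsto f_v$ used in the construction of $A$ is injective, and likewise for $B$. A segment $v \in V \setminus \sep$ is placed in $A$ precisely when $f_v \in A_{\cH}$, and in $B$ precisely when $f_v \in B_{\cH}$. Consequently, the total weight of the representative fragments lying in $A_{\cH}$ equals $|A|/n$, and, since only representative fragments contribute positive weight, this is the entire weight that $A_{\cH}$ carries from representative fragments of segments in $V \setminus \sep$.

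Next I would bound this quantity. Any additional representative fragments that lie in $A_{\cH}$ could only increase its total weight, so in fact $|A|/n$ is at most the total weight of $A_{\cH}$. By Lemma~\ref{lem: weighted seperator} the total weight of $A_{\cH}$ is at most $\tfrac{2}{3}$, giving $|A| \leq \tfrac{2n}{3}$. The argument for $B$ is word-for-word identical with $A_{\cH}$ replaced by $B_{\cH}$.

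There is essentially no obstacle here; the only subtle point is ensuring that segments appearing in some star of $\sep$ are not double-counted in $A \cup B$, which is exactly why the definition restricted attention to segments $v \in V \setminus \sep$ and why the paragraph preceding the observation argues that $f_v \notin \sep_{\cH}$ for such $v$ (otherwise $\mystar(\seg(f_v)) = \mystar(v)$ would already belong to $\sep$). Thus both bounds follow immediately from the weighted planar separator theorem.
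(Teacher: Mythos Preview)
Your proposal is correct and follows essentially the same approach as the paper: the paper justifies the observation in the sentence immediately preceding it by noting that $|A|/n$ is at most the total weight of $A_{\cH}$, which is bounded by $\tfrac{2}{3}$ via Lemma~\ref{lem: weighted seperator}. You have simply spelled out this one-line argument in more detail, including the (correct) remark about why $f_v \notin \sep_{\cH}$ for $v \in V \setminus \sep$.
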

The more challenging part is to show that no segment in $A$ intersects any segment in~$B$.
We will need the following lemma. Recall that the segment set $V$ is partitioned into
color classes~$V_1,\ldots,V_c$, which we handled one by one to create the set $\actF$ of
active fragments.
\begin{lemma}\label{lem: assignment properties of blocks}
Let $v \in V_i\setminus \sep$ for some $1\leq i \leq c$ and let $f$ be an active fragment.
Suppose one of the following conditions holds:
\begin{enumerate}[(i)]
\item $f \in F_{<i}$ and $v$ intersects $f$, 
\item $\seg(f)=v$, or 
\item $f$ is equivalent to an inactive fragment $f'$ such that $\seg(f')=v$.
\end{enumerate}
Then $f\in A_{\cH}$ if $v\in A$, and $f\in B_{\cH}$ if $v\in B$. 
\end{lemma}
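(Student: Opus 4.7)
The plan is to reduce the assignment question to a connectivity question in $\cH$: since $A_{\cH}$ and $B_{\cH}$ are separated by $\sep_{\cH}$ and the hypothesis $v \notin \sep$ forces $f_v \in A_{\cH} \cup B_{\cH}$, it suffices to exhibit a walk in $\cH$ from $f$ to $f_v$ that avoids every vertex of $\sep_{\cH}$. Such a walk places $f$ and $f_v$ in the same connected component of $\cH \setminus \sep_{\cH}$, hence on the same side of the partition, which is precisely the conclusion of the lemma.

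The walk I would construct traces the segment $v$ itself. Let $p_1,\ldots,p_m$ be the cut points along $v$ in order, lying on active fragments $g_1,\ldots,g_m \in \actF_{<i}$, and let $\phi_0,\ldots,\phi_m$ be the resulting fragments of $v$ in $X_i$, with $\phi_0$ and $\phi_m$ the (always active) end fragments. For each $j$ set $a(\phi_j):=\phi_j$ if $\phi_j$ is active, and otherwise let $a(\phi_j)$ be the unique active fragment equivalent to $\phi_j$ in $X_i$. By condition~(i) of Definition~\ref{def:equivalence}, $a(\phi_j)$ still connects the pair $g_j, g_{j+1}$, so it is adjacent in $\cH$ to both. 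Hence $a(\phi_0),\,g_1,\,a(\phi_1),\,g_2,\,\ldots,\,g_m,\,a(\phi_m)$ is a walk in $\cH$, with $f_v\in\{\phi_0,\phi_m\}$ as an endpoint. In each case of the hypothesis, $f$ already lies on this walk: case~(ii) gives $f=a(\phi_j)$ for the unique $j$ with $\phi_j=f$; case~(iii) gives $f=a(\phi_j)$ where $\phi_j$ is the inactive fragment $f'$; and in case~(i) the point $v\cap f$ must be a cut point $p_j$, since $v$ was cut at every intersection with $\actF_{<i}$, so $f=g_j$.

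The remaining task is to verify that no vertex of the walk lies in $\sep_{\cH}$, and here the star construction of Step~3 does the work. If some $g_j\in\sep_{\cH}$, then Step~3 places $\mystar(\seg(g_j))$ in $\sep$, and because $v$ crosses $g_j$ this forces $v\in\sep$, contradicting the hypothesis. If some $a(\phi_j)\in\sep_{\cH}$, then either $\seg(a(\phi_j))=v$, in which case $\mystar(v)\in\sep$ directly, or $a(\phi_j)$ is an internal active substitute whose segment differs from $v$; but in that case Step~3 also inserts $\mystar(\seg(g_j))$ and $\mystar(\seg(g_{j+1}))$ into $\sep$, and since $v$ crosses $g_j$ we again obtain $v\in\sep$. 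Both conclusions contradict $v\notin\sep$, so the walk avoids $\sep_{\cH}$ and the lemma follows. The main obstacle I anticipate is precisely this substitution step: when a fragment $\phi_j$ of $v$ is inactive, the walk has to leave $v$ and traverse a fragment of another segment, and the fact that this preserves both the walk's connectivity in $\cH$ and its disjointness from $\sep_{\cH}$ rests entirely on Definition~\ref{def:equivalence} together with Step~3's choice to add three stars, not one, for every internal fragment selected into $\sep_{\cH}$.
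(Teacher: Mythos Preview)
Your proposal is correct and follows essentially the same approach as the paper: both construct the alternating walk $f_v = a(\phi_0), g_1, a(\phi_1), \ldots, g_m, a(\phi_m)$ in~$\cH$ obtained by tracing~$v$ and substituting active representatives for inactive fragments, then argue that no vertex of this walk can lie in~$\sep_{\cH}$ without forcing $v\in\sep$ via the three-star rule of Step~3. The paper phrases the last step as a contradiction (if some walk vertex lay in $B_{\cH}\cup\sep_{\cH}$, the subpath from~$f_v$ would have to cross~$\sep_{\cH}$), whereas you show directly that every walk vertex avoids~$\sep_{\cH}$; the two formulations are equivalent.
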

\begin{proof}
    We prove the lemma under the assumption that $v \in A$; the proof for $v \in B$ is analogous. 
    Let $f_v$ be the representative fragment of~$v$. Because $v \in A$, we have $f_v \in A_{\cH}$.
    We will define sets $Z_1,Z_2,Z_3$ that contain the active fragments for which 
    conditions~(i), (ii), and (iii) hold, respectively, and then argue that the 
    lemma holds for each of the three sets.
    
    Let $f'_v$ be the other end fragment of~$v$, and let
    $Z_1=\{g_1,\ldots,g_k\}$ be the ordered set of fragments from $\actF_{<i}$ that we 
    cross as we trace $v$ from $f_v$ to $f'_v$; see Figure~\ref{fig:path}. It is possible that~$f'_v$ does not exist, in this case $Z_1$ is empty.
    Note that every pair 
    $g_j,g_{j+1}\in Z_1$ is connected by an active or inactive fragment of~$v$, which we denote by $f_j$. We now define~$Z_2,Z_3$ as follows.
    \begin{itemize}
        \item $Z_2$ contains the fragments $f_j$ that are active plus
              the end fragments $f_v$ and (if it exists) $f'_v$. Thus, $Z_2$ simply contains all active fragments of~$v$.
       \item $Z_3$ contains, for each inactive fragment $f_j$, the unique equivalent 
             active fragment $f'_j\in F_i$. 
    \end{itemize}
    It is easily checked that the sets $Z_1,Z_2,Z_3$ indeed contain exactly those fragments 
    for which conditions~(i), (ii), and (iii) hold, respectively. 
    We will now prove that all fragments in~$Z_1\cup Z_2\cup Z_3$ are in $A_{\cH}$.
    To this end, observe that the fragments in $Z_1\cup Z_2\cup Z_3$ correspond to nodes
    in $\cH$ that form a path~$\pi$ starting at $f_v$ and ending at~$f'_v$,
    as illustrated in Figure~\ref{fig:path}. Recall that $f_v\in A_{\cH}$.
    Now assume for a contradiction that there is a fragment $f\in Z_1\cup Z_2\cup Z_3$ that is in~$B_{\cH}\cup \sep_{\cH}$.
    Because there are no edges between $A_{\cH}$ and $B_{\cH}$, this means
    there must be a fragment $f^*$ on the subpath of $\pi$ from $f_v$ to $f$
    that is an element of the separator~$\sep_{\cH}$. 
    If $f^*\in Z_1\cup Z_2$ then $v\in \mystar(\seg(f^*))$, which contradicts that $v\in A$.
    Otherwise $f^*\in Z_3$ and $f^*$ connects two fragments $g_j,g_{j+1}\in Z_1$. 
    Since $f^*\in\sep_{\cH}$, this implies that $\mystar(\seg(g_j))$ is a star in~$S$. 
    Because $v\in \mystar(\seg(g_j))$, this again contradicts that $v\in A$.  
\end{proof}
We can now prove that $\ig[V]$ does not contain edges between nodes in $A$ and
nodes in~$B$.
\begin{lemma}\label{lem:path}
No segment $a\in A$ intersects any segment $b\in B$. 
\end{lemma}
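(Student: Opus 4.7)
The plan is to derive a contradiction from the assumption that some $a\in A$ intersects some $b\in B$. Since each color class $V_k$ is an independent set in $\ig[V]$, the segments $a$ and $b$ must lie in different color classes, so I may assume without loss of generality that $a\in V_i$ and $b\in V_j$ with $i<j$. Let $p$ be the unique crossing point of $a$ and $b$, and let $f$ be the fragment of $a$ (in the partition produced when processing $V_i$) that contains $p$. I will split on whether $f$ is active or inactive; in either case the contradiction will come from applying Lemma~\ref{lem: assignment properties of blocks} to find a single active fragment that is forced into both $A_{\cH}$ and $B_{\cH}$, which is impossible since these sets are disjoint.

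If $f$ is active, the argument is immediate: $f\in \actF_i\subseteq \actF_{<j}$ and $b$ crosses $f$, so part~(i) of Lemma~\ref{lem: assignment properties of blocks} applied to $v=b$ places $f$ in $B_{\cH}$, while $\seg(f)=a$ and part~(ii) applied to $v=a$ places $f$ in $A_{\cH}$.

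The hard case will be when $f$ is inactive. Since end fragments are always active, $f$ must be an internal fragment, so it connects a pair of active fragments $g,g'\in \actF_{<i}$ and by construction is equivalent to some active fragment $f'\in \actF_i$ connecting the same pair, with the enclosed region $Q:=Q(f,f',g,g')$ containing no endpoint of any segment in $V$. The key step is a short topological argument: at $p$ the curve $b$ locally crosses $f\subseteq \partial Q$, so a piece of $b$ enters $Q$, and since $b$ has no endpoint inside $Q$ it must exit through $\partial Q$ again; moreover, two pseudo-segments cross at most once, so $b$ cannot re-cross the portion of $\partial Q$ lying on $f$. Thus $b$ must cross $\partial Q\setminus f$, which is contained in $f'\cup g\cup g'$ (the shrinking trick from the definition of $Q$ covers the situation in which $g$ and $g'$ touch). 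Each resulting sub-case produces a contradiction via Lemma~\ref{lem: assignment properties of blocks}: if $b$ crosses $f'$, then part~(i) with $v=b$ places $f'$ in $B_{\cH}$, while part~(iii) with $v=a$---using that $f'$ is active, equivalent to the inactive fragment $f$ with $\seg(f)=a$---places $f'$ in $A_{\cH}$; and if $b$ crosses $g$ (the case of $g'$ is symmetric), then part~(i) with $v=b$ places $g$ in $B_{\cH}$, and since $f\subseteq a$ has an endpoint on $g$ the segment $a$ itself intersects $g$, so part~(i) with $v=a$ places $g$ in $A_{\cH}$.

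The main obstacle I anticipate is precisely the inactive case: Lemma~\ref{lem: assignment properties of blocks} alone does not detect the crossing between $a$ and $b$ along an inactive fragment, because such a fragment is not a vertex of $\cH$ and so does not belong to $A_{\cH}\cup B_{\cH}\cup \sep_{\cH}$. The geometric content of the proof is the observation that $Q$ contains no endpoint of $V$, which upgrades the single crossing at $p$ into an unavoidable second crossing of $b$ with one of the \emph{active} boundary fragments $f'$, $g$, or $g'$ of $Q$---and it is that second active fragment on which Lemma~\ref{lem: assignment properties of blocks} is brought to bear from both sides.
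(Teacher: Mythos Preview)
Your proof is correct and follows essentially the same approach as the paper: both use Lemma~\ref{lem: assignment properties of blocks} together with the topological argument that $b$ must exit the region $Q$ through one of the active fragments $f'$, $g$, or $g'$. Your case split (on whether the fragment of the earlier-colored segment is active) is slightly more streamlined than the paper's three-way split on the activity status of both fragments, but the underlying argument is identical.
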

\begin{proof}
Assume for a contradiction that $a \in A$ and $b \in B$ intersect. Let $f_a\subseteq a$ and $f_b\subseteq b$
be the fragments containing the intersection point. We distinguish three cases.
\begin{itemize}
\item \emph{Both $f_a$ and $f_b$ are active.} Then $f_a$ and $f_b$ satisfy condition~(ii)
    of Lemma \ref{lem: assignment properties of blocks}, and so~$f_a \in A_{\cH}$ and $f_{b} \in B_{\cH}$. 
    Because $f_a$ and $f_b$ are active and intersect, $(f_a,f_b)$ is an edge in~$\cH$. 
    But then $\sep_{\cH}$ would not be a proper separator, and we reach a contradiction.
\item \emph{Both $f_a$ and $f_b$ are inactive.} Let $f_a \in \inactF_i$ and $f_b \in \inactF_j$, 
    and assume without loss of generality that $i \leq j$. Because $a$ and $b$ intersect
    and segments from the same color class are disjoint, we cannot have $i=j$. Hence, $i < j$.
    Because $f_a$ is inactive, it must be an internal fragment that connects 
    some fragments $g,g' \in \actF_{<i} \subseteq \actF_{<j}$. 
    Thus, $g,g'$ satisfy condition~(i) of Lemma \ref{lem: assignment properties of blocks},
    with $a$ playing the role of~$v$, which implies that $g,g' \in A_{\cH}$. 
    Let~$f_a'\in \actF_i$ be the active fragment that is equivalent to $f_a$. 
    Then $f'_a$ satisfies condition~(iii) from  Lemma \ref{lem: assignment properties of blocks} 
    and so $f_a' \in A_{\cH}$. The fragments $f_a,f_a',g,g'$ enclose some region $Q$. 
    The segment $b$ intersects $f_a$ so it is partly contained in $Q$;
    see Figure~\ref{fig:enteringQ} for an illustration of the possibilities for $f_b$ entering $Q$.
    \begin{figure}[t]
    \begin{center}
    \includegraphics{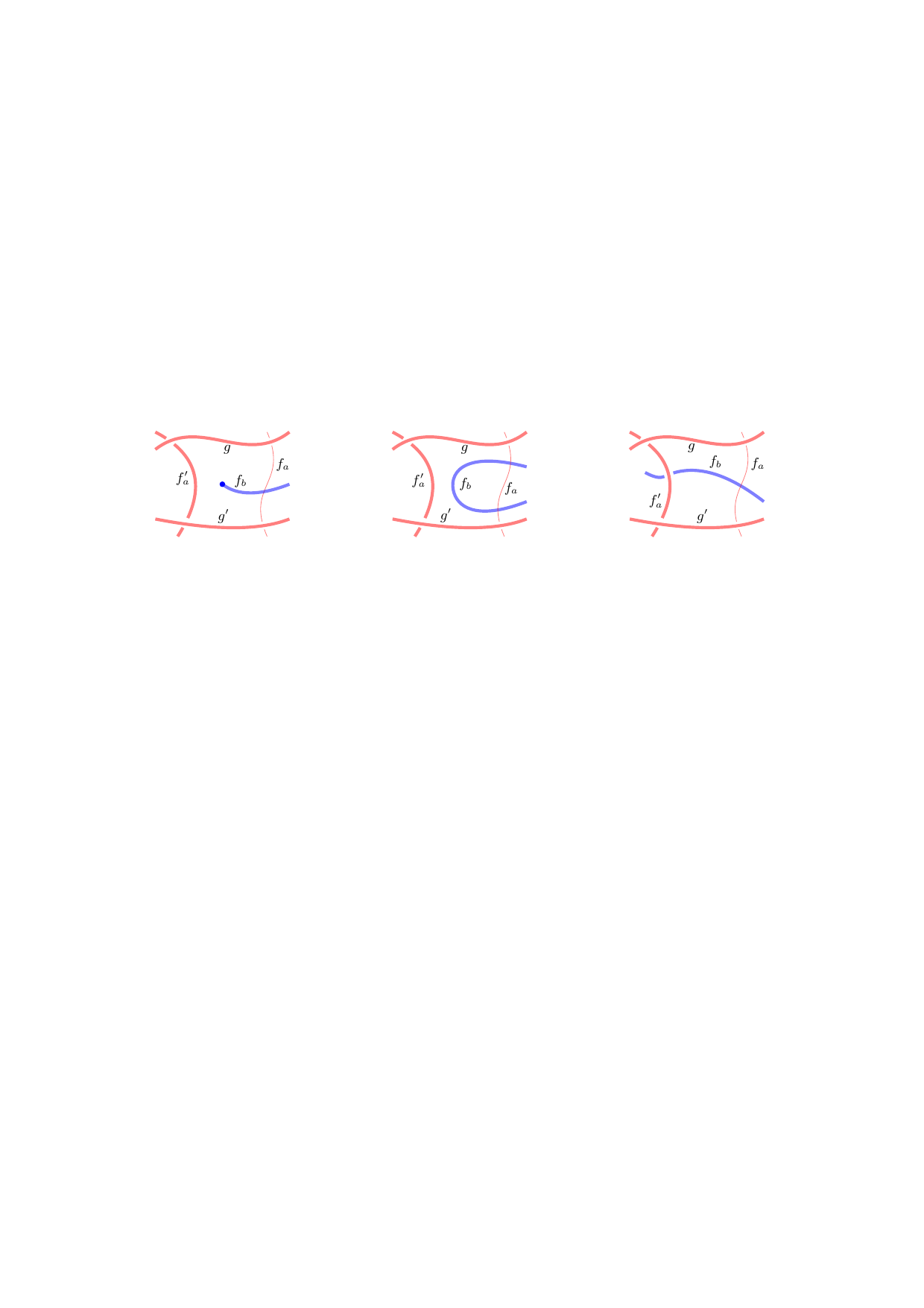}
    \end{center}
    \caption{Left: The segment $b$ has an endpoint in $Q$. Middle: The segment $b$ intersects $a$ twice. Right: The segment $b$ intersects $f_a'$. All of these cases lead to a contradiction. }
    \label{fig:enteringQ}
    \end{figure}
    It cannot have an endpoint in $Q$, because~$f_a$ and~$f_a'$ are equivalent. 
    Segment $b$ cannot intersect $f_a$ twice, because $V$ is a collection of pseudo-segments. 
    It follows that $b$ must intersect $f_a$, $g$ or $g'$. Now observe that
    $f_a$, $g$ and $g'$ are all fragments in~$\actF_{<j}$. This implies
    that the fragment that intersects~$b$ satisfies condition~(i) from 
    Lemma \ref{lem: assignment properties of blocks}, with $b$ playing the role of~$v$.
    But then the intersected fragment would be in $B_{\cH}$,
    contradicting that $A_{\cH}$ and $B_{\cH}$ are disjoint. 
\item  \emph{One of the fragments $f_a,f_b$ is active and one is inactive.}
    Assume wlog that $f_a$ is inactive and $f_b$ is active. 
    Let $f_a \in \inactF_i$ and $f_b \in \actF_j$. Because $a$ and $b$ intersect,
    we have $i \neq j$. 
    If~$i<j$ then the arguments from the previous case can be used
    to obtain a contradiction---indeed, these arguments did not use that $f_b$ is inactive.
    If $i > j$ then $f_b \in F_{<i}$. Also note that $a$ intersects $f_b$. 
    But then $f_b$ satisfies condition~(i) from Lemma \ref{lem: assignment properties of blocks},
    with $a$ playing the role of~$v$. It follows that $f_b \in A_{\cH}$,
    which contradicts that $A_{\cH}$ and $B_{\cH}$ are disjoint.
\end{itemize}
We have reached a contradiction in each case, thus proving the lemma.
\end{proof}

\subparagraph{\em Proving the size property.}
Because we add at most three stars to $\sep$ per fragment in $\sep_{\cH}$, it suffices to 
bound the size of $\sep_{\cH}$ to prove the size property. From Lemma \ref{lem: weighted seperator} 
it follows that~$|\sep_{\cH}|=O(\sqrt{|\actF|})$. 
The next lemma bounds~$|\actF|$. 
\begin{lemma}
$|\actF|=O(n\cdot 4^c)$, where $c$ is the number of color classes in~$V$.
\end{lemma}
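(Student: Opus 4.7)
The plan is to establish, by induction on $i$, the recurrence $a_i \leq 4 a_{i-1} + O(n)$ for $a_i := |\actF_{\leq i}|$, which unrolls (with $a_0 := 0$) to $a_c = O(n \cdot 4^c)$. The base case $a_1 \leq n$ is immediate since each segment in $V_1$ contributes exactly one fragment. For the inductive step I partition $\actF_i$ into its end fragments and its internal fragments; since a segment in $V_i$ produces at most two end fragments, the former contribute at most $2|V_i| \leq 2n$, and the remaining task is to show that there are at most $3 a_{i-1} + O(n)$ internal fragments in $\actF_i$.

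Every internal fragment in $\actF_i$ is the unique representative of an equivalence class in $X_i$, and every class is characterized by the pair $(g,g')$ of fragments in $\actF_{<i}$ that its members connect. For each such pair, the connecting chords are pairwise non-crossing (same-color constraint) and thus linearly ordered, and by the transitivity argument in the proof of Lemma~\ref{lem:equivalence} the classes along this order are exactly the maximal runs of consecutive chords whose pairwise bounded regions contain no endpoint of~$V$. Hence the number of classes for a pair equals $1 + t_{(g,g')}$, where $t_{(g,g')}$ counts consecutive chord-pairs whose bounded region contains at least one endpoint of~$V$. Summing over pairs gives that the number of internal fragments in $\actF_i$ is at most $P_i + S_i$, where $P_i$ is the number of pairs with at least one connecting internal fragment and $S_i := \sum_{(g,g')} t_{(g,g')}$.

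Bounding $P_i$ is straightforward: associating each pair with an edge drawn along one of its connecting chords yields a planar simple graph on~$\actF_{<i}$, since the chords (being non-crossing) provide an immediate planar embedding. Euler's formula therefore gives $P_i \leq 3 a_{i-1} - 6$. The crux of the argument is to show $S_i = O(n)$, which I plan to do via a charging argument: each contribution to $t_{(g,g')}$ is charged to an endpoint of $V$ lying in the corresponding separating region $Q(f_k, f_{k+1}, g, g')$. The main obstacle is that an endpoint $p$ could a priori lie in separating regions for many distinct quadruples $(g, g', f_k, f_{k+1})$, and care is needed to show the total charge remains $O(n)$. To do so, I exploit the laminar structure of the non-crossing $V_i$-chords around the face of the arrangement of $\actF_{\leq i}$ containing~$p$, arguing that each separating quadruple is determined by the two consecutive same-pair $V_i$-chords immediately sandwiching this face, so that the pair $(g,g')$ they belong to restricts the valid choices to a constant per endpoint.

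Combining the two bounds yields at most $3 a_{i-1} + O(n)$ internal fragments in $\actF_i$, hence $a_i \leq 4 a_{i-1} + O(n)$, and unrolling the recurrence from $a_0 = 0$ gives $a_c = O(n \cdot 4^c)$.
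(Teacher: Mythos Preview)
Your recurrence structure, the end-fragment bound, and the planarity bound $P_i \leq 3a_{i-1} - 6$ all match the paper's setup. The divergence is in how you handle multiple equivalence classes for the same pair $(g,g')$. The paper does not split into $P_i + S_i$. Instead, it builds a single planar \emph{multi}-graph whose nodes are the fragments in $\actF_{<i}$ together with the (at most $2n$) endpoints of segments in $V_{\geq i}$ as isolated nodes, plus one extra node $u_\infty$, and whose edges are the active internal fragments. Since any two non-equivalent active fragments connecting the same pair enclose a region containing a segment endpoint of~$V$, this multi-graph is \emph{thin} (every bigon face has a node in its interior and one in its exterior). The paper then invokes the known bound $|E| \leq 3|V| - 6$ for thin planar multi-graphs~\cite{10.1145/990308.990309}, giving $|\actF_i^{\,\mathrm{int}}| \leq 3(|\actF_{<i}| + 2n + 1) - 6$ in one stroke.

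The gap in your proposal is the justification that each endpoint is charged $O(1)$ times. Your claim that a quadruple charging $p$ is determined by the chords ``immediately sandwiching'' $p$'s face is not correct: the chords $f_k, f_{k+1}$ defining a break for pair $(g,g')$ are consecutive only among chords of \emph{that} pair, so there can be chords of other pairs---and even other fragments of $\actF_{<i}$---between them and $p$'s face. Concretely, take $g$ at $x=0$ and a nested sequence of progressively shorter vertical fragments $g^{(1)}, g^{(2)}, \ldots, g^{(k)}$ at $x = 10, 5, 2.5, \ldots$; for each $j$ one can place two horizontal chords from $g$ to $g^{(j)}$ just above and below the height of $g^{(j+1)}$, yielding a break region for $(g, g^{(j)})$ that contains every endpoint at deeper levels. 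A single deep endpoint $p$ then lies in $\Theta(k)$ break regions, none of whose defining chords bound $p$'s face. It is true that this construction simultaneously introduces $\Theta(k)$ other endpoints, so a more careful charging (e.g., charge the break of $(g, g^{(j)})$ to an endpoint of $g^{(j+1)}$) may still give $O(1)$ per endpoint and hence $S_i = O(n)$---but your laminar/sandwiching argument as stated does not establish this. The thin-graph route in the paper is exactly the device that absorbs this difficulty: by promoting the endpoints to nodes of the multigraph \emph{before} applying the Euler-type bound, it accounts globally for the separation of parallel edges without having to track which endpoint witnesses which break.
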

\begin{proof}
We first bound $|\actF_{i}|$, the number of active fragments created for
the segments in~$V_{i}$, in terms of the number of active fragments created
for $V_{<i}$.
\begin{claiminproof}
For $1< i \leq c$ we have  $|\actF_i| \leq 2\cdot |V_i| + 3\cdot (|\actF_{<i}| + 2n +1) -6$.
\end{claiminproof}
\vspace*{-5mm}
\begin{proofinproof}
The set $\actF_{i}$ contains at most $2 \cdot |V_{i}|$ end fragments. 
The internal fragments in $\actF_i$ connect two fragments 
from $\actF_{<i}$. We denote the set of these internal fragments by~$\actF_i^{\,\mathrm{int}}$.
Now consider the multi-graph~$\graph$ defined as follows.
\begin{itemize}
\item For each fragment in $\actF_{<i}$, we add a node to $\graph$.
\item For each fragment in $\actF_i^{\,\mathrm{int}}$, we add an edge
      to $\graph$ between the fragments from $\actF_{<i}$ that it connects.
\item For each endpoint of a segment in~$V_{\geq i}$ we add
      a singleton node to $\graph$. (Observe that the endpoints
      of segments in $V_{<i}$ lie on an end fragment in~$\actF_{<i}$, 
      which is already a node in $\graph$.) 
\end{itemize}
Now consider the obvious drawing of $\graph$, where the nodes
are drawn as fragments of~$\actF_{<i}$ or as points, and the edges are
drawn as fragment in $\actF_i^{\,\mathrm{int}}$.
Recall that two active fragments can touch, but they never cross.
By continuously shrinking the fragments in~$\actF_{\leq i}$ and deforming
the fragments of $\actF_i^{\,\mathrm{int}}$ appropriately, we can therefore 
create a plane drawing of $\graph$. That is, we can create a drawing 
of~$\graph$ in which the nodes are points
and the edges are pairwise disjoint curves connecting their endpoints.
See Figure~\ref{fig:fragmentstoplane} for an example of this deformation. For reasons that will become clear shortly,
we augment~$\graph$ with one additional singleton node $u_{\infty}$, which
we place in the unbounded face of~$\graph$.

\hspace*{1.5em}The graph $\graph$ is a multi-graph because $\actF_i^{\,\mathrm{int}}$
can contain multiple fragments connecting the same pair of fragments 
$f,f' \in \actF_{< i}$. Let $g,g'\in \actF_i^{\,\mathrm{int}}$ be 
two such fragments. The reason that we added both $g$ and $g'$ to
$\actF_i^{\,\mathrm{int}}$ is that $g$ and $g'$ were not equivalent.
Hence, the region $Q(f,f',g,g')$ 
contains an endpoint belonging to some segment~$v\in V$.
The deformation process that turns each node the drawing of $\graph$
into a point can be done in such a way that this property is maintained.
Thus, after the deformation we have a plane drawing of $\graph$ in which
for any two edges~$g,g'$ that connect the same pair of nodes, there
is a node inside the deformed region $Q(f,f',g,g')$.
Because of the additional node~$u_{\infty}$, we are also guaranteed
to have at least one node outside this region. A plane multi-graph
with this property is called a \emph{thin graph}. It is
known~\cite[Lemma 5]{10.1145/990308.990309} that the 
standard inequality 
$(\mbox{\#\,edges})\leq 3\cdot (\mbox{\#\,nodes})-6$  that holds for
planar graphs (with at least three vertices) also holds for thin graphs.
Hence, $|\actF_i^{\,\mathrm{int}}| \leq 3 (|\actF_{< i}|+2n+1)-6$.
\end{proofinproof}
Note that $|\actF_1| = |V_1|$ and $|V_i|  \leq n$ for all~$i$. Hence, the claim above gives us the 
recurrence~$|\actF_{\leq i}| \leq 4 |\actF_{< i}| + 8n - 3$ with~$|\actF_{\leq 1}| \leq n$.
This gives
$|\actF_{\leq i}| \leq \left(\tfrac{n}{4}+\tfrac{8n-3}{12}\right)\cdot 4^i-\tfrac{8n-3}{3}$. 
Plugging in $i=c$ gives $|\actF|=|\actF_{\leq c}|=O(n\cdot4^c)$, which proves the lemma.
\end{proof}
\begin{figure}[t]
\begin{center}
\includegraphics{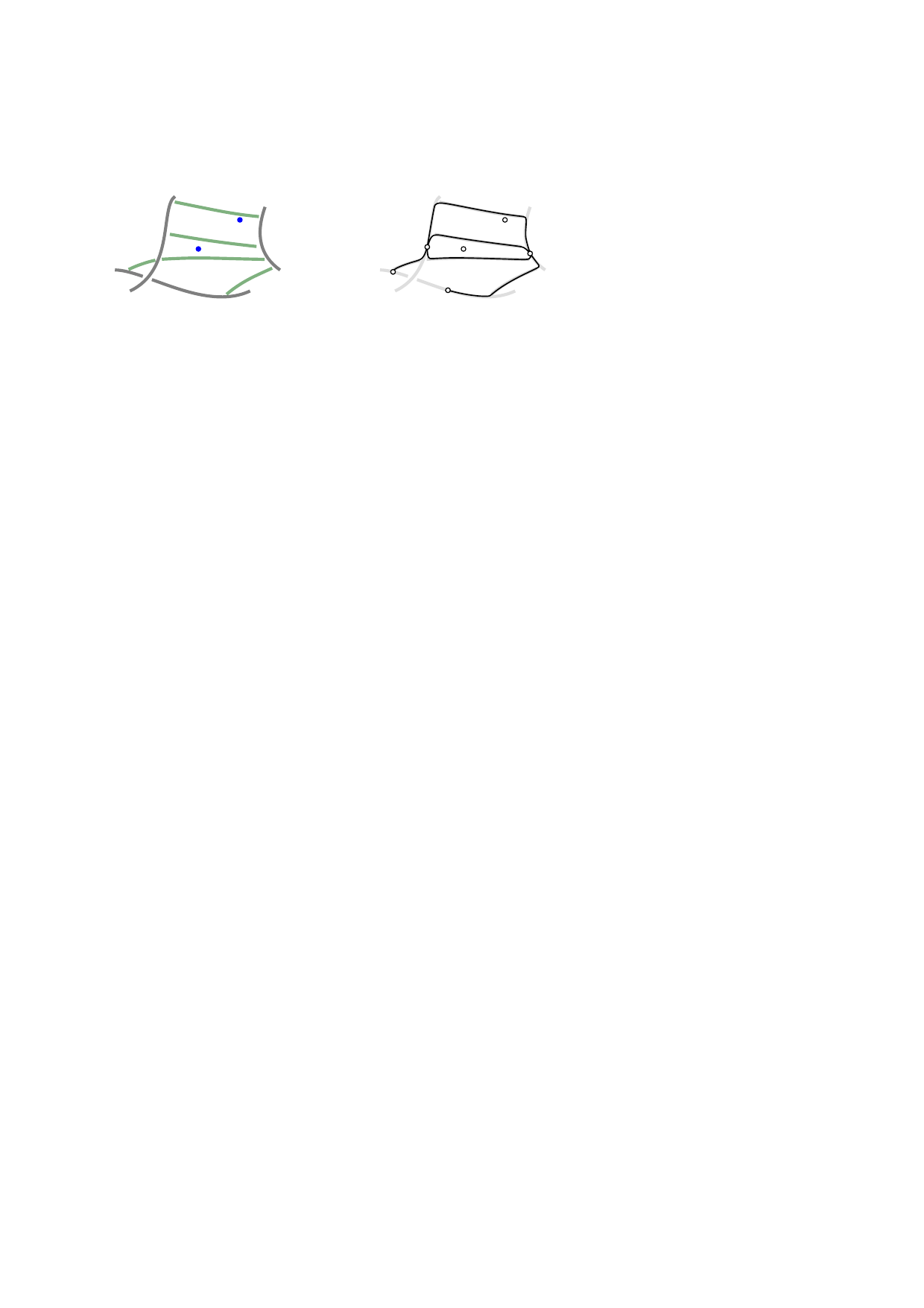}
\end{center}
\caption{Left: Fragments in $V_i$ are green, fragments in $\actF_i^{\,\mathrm{int}}$
         are dark gray, and endpoints of segments in $V_{\geq i}$ are blue. 
         Right: The plane multigraph created for the example on the left. }
\label{fig:fragmentstoplane}
\end{figure}
Since $c$, the number of color classes, is a constant, we obtain the following corollary.
\begin{corollary}
    The separator $\sep$ contains $O(\sqrt{n})$ star graphs.
\end{corollary}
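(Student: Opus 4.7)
The plan is to simply chain together the three preceding results. First, I would invoke the lemma just proved, which gives $|\actF| = O(n \cdot 4^c)$. Since $c$ is assumed to be a constant throughout the paper, the factor $4^c$ is absorbed into the $O(\cdot)$ notation, yielding $|\actF| = O(n)$.

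Second, I would apply the weighted planar separator theorem (Lemma~\ref{lem: weighted seperator}) to the contact graph $\cH$ constructed in Step~2. Recall that $\cH$ was shown to be planar (as a contact graph of curves, by the cited result of Hlin\v{e}n\'y, combined with the general position assumption that no three segments meet in a common point), and that the weights assigned to the representative fragments sum to exactly $1$ while all other fragments carry weight~$0$. The lemma therefore guarantees a separator $\sep_{\cH}$ of size $O(\sqrt{|\actF|}) = O(\sqrt{n})$.

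Third, I would use the explicit recipe from Step~3 of the construction: for each fragment $f \in \sep_{\cH}$, at most three stars are inserted into $\sep$ (exactly one if $f$ is an end fragment, and exactly three—namely $\mystar(\seg(f))$, $\mystar(\seg(g))$, and $\mystar(\seg(g'))$—if $f$ is an internal fragment connecting $g,g' \in \actF$). Hence, before any deduplication or trimming of non-center nodes, the number of stars placed into $\sep$ is at most $3 |\sep_{\cH}| = O(\sqrt{n})$. The post-processing that removes duplicate centers and strips non-center nodes to make the stars pairwise disjoint only decreases this count, so the final separator still contains $O(\sqrt{n})$ stars, proving the corollary.

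There is essentially no obstacle here: the corollary is a one-line consequence obtained by substituting the constant $c$ into the preceding lemma, feeding the resulting linear bound into the square root from the weighted planar separator theorem, and multiplying by the constant $3$ coming from Step~3. All of the real work was done in the two earlier lemmas (the bound on $|\actF|$ and the separation property), so this final step is purely bookkeeping.
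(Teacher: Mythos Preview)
Your proposal is correct and follows precisely the same approach as the paper: combine the bound $|\actF|=O(n\cdot 4^c)=O(n)$ with the weighted planar separator theorem to get $|\sep_{\cH}|=O(\sqrt{n})$, then use the fact that at most three stars are added per fragment of~$\sep_{\cH}$. The paper's own justification is indeed just this one-line chaining of the preceding results.
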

So far we have considered input sets~$V$ where the segments are unweighted. 
To create a separator for $c$-oriented polygons, which we will do in the next 
section, we need a separator for weighted segments. Fortunately it is straightforward
to adapt the construction described above to the weighted setting---we only
need to change the weighting scheme we used in Step~2 of the construction.
More precisely, instead of assigning a weight $\tfrac{1}{n}$ to the representative
fragment of a segment $v$, we assign $\mathrm{weight}(v)/\sum_{u\in V}\mathrm{weight}(u)$ to the representative. 
\medskip

\noindent \emph{Remark.}
It is well known that grid graphs do not admit node-based separators of size $o(\sqrt{n})$. 
Because nodes in grid graphs have constant degree, bicliques in grid graphs have constant size.
Hence, grid graphs do not admit biclique-based separators of size $o(\sqrt{n})$. 
Grid graphs are bipartite and planar, which implies that they are {\sc pure-2-dir} graphs~\cite{HARTMAN199141}. 
We conclude that even {\sc pure-2-dir} graphs do not admit biclique based separators of size $o(\sqrt{n})$.

\subparagraph{Computation time.}
If we assume that the appropriate elementary operations on the pseudo-segments---computing
the intersection point of two pseudo-segments, for instance, or determining if a point
lies inside some region $Q(f,f',g,g')$---can be performed in $O(1)$ time, that then
a brute-force implementation of the algorithm presented above runs in polynomial time.
More interestingly, for $c$-oriented line segments, the algorithm 
can be implemented to run in $O(n\log n)$ time, as shown in Section \ref{sec:algorithms}.
We obtain the following theorem.
\begin{theorem}\label{thm: seperator for pseudo segments}
    Let $V$ be a $c$-colored set of $n$ non-negatively weighted pseudo-segments, 
    where~$c$ is a fixed constant, whose total weight 
    is at most~$1$. Then the intersection graph $\ig[V]$ has a star-based 
    separator~$\sep$ of size $O(\sqrt{n})$ such that $V\setminus\sep$ can be partitioned 
    into subsets $A,B$ of weight at most~$\tfrac{2}{3}$ with no edges between them. 
    The bound on the size of the separator is tight, even for axis-parallel segments.
    In the special case where $V$ is a set of $c$-oriented line segments,
    the separator $\sep$ and parts $A,B$ can be computed in $O(n\log n)$ time.
\end{theorem}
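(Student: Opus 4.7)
The plan is to assemble the theorem from the ingredients already developed in Steps~1--3 and to point out where a small additional argument is needed. The construction from Steps~1--3 produces the candidate star-based separator $\sep$ together with the partition $A,B$ defined via representative fragments. For the size bound I would combine three facts: Step~3 adds at most three stars per fragment of $\sep_{\cH}$; Lemma~\ref{lem: weighted seperator} gives $|\sep_{\cH}| = O(\sqrt{|\actF|})$; and the lemma preceding the theorem, via the thin-graph edge bound, yields $|\actF| = O(n \cdot 4^c)$. Since $c$ is a fixed constant we obtain $|\sep| = O(\sqrt{n})$. The hard part of this route, already handled, is showing that the multi-graph used in bounding $|\actF_i|$ is thin; this is where the equivalence relation of Definition~\ref{def:equivalence} is essential, since two parallel edges must enclose an endpoint of some other segment.

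For the weighted separation property, I would reuse Step~2 with the modified weights mentioned at the end of the section: assign weight $\mathrm{weight}(v)/\sum_{u\in V}\mathrm{weight}(u)$ to the representative fragment of each $v\in V$. The fragment weights still sum to~$1$, so Lemma~\ref{lem: weighted seperator} applies verbatim. The assignment of a segment $v$ to either $A$ or $B$ according to its representative fragment then gives $\sum_{v\in A}\mathrm{weight}(v)$ bounded by the total fragment-weight inside $A_{\cH}$, and thus at most~$\tfrac{2}{3}$ (and analogously for~$B$); the hypothesis that the total weight of $V$ is at most~$1$ absorbs the normalization. Lemmas~\ref{lem: assignment properties of blocks} and~\ref{lem:path} are purely combinatorial, referring only to intersection structure and fragment types, so they transfer unchanged and guarantee that no edges cross from $A$ to $B$.

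Finally, tightness follows from the Remark preceding the theorem: the $\sqrt{n}\times\sqrt{n}$ grid graph is bipartite and planar, hence representable as a {\sc pure-2-dir} graph; all of its bicliques have constant size, and any separator still needs to remove $\Omega(\sqrt{n})$ of its vertices, so $\Omega(\sqrt{n})$ bicliques---in particular stars---are required. The running-time assertion for $c$-oriented line segments I would defer to Section~\ref{sec:algorithms}, where the main obstacle will be implementing Step~1 in $O(n\log n)$ time: computing the active fragments, identifying equivalence classes, and selecting a representative internal fragment per class all need a careful sweep-line treatment to avoid a naive quadratic blow-up; the subsequent planar-separator call on $\cH$ and the star extraction in Step~3 are then standard.
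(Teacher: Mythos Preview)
Your proposal is correct and mirrors the paper's own assembly of the theorem: the size bound via the three-stars-per-fragment count, Lemma~\ref{lem: weighted seperator}, and the $|\actF|=O(n\cdot 4^c)$ bound; the weighted separation via the modified representative weights together with Lemmas~\ref{lem: assignment properties of blocks} and~\ref{lem:path}; tightness via the grid-graph Remark; and the $O(n\log n)$ running time deferred to Section~\ref{sec:algorithms}. There is no substantive difference in approach.
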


\subparagraph{Application to distance oracles.} \label{section:distance:oracles}
Arikati~\etal~\cite{DBLP:conf/esa/ArikatiCCDSZ96} presented a simple distance oracle for
planar graphs, using node-based separators. Aronov, De~Berg, and Theocharous~\cite{abt-cbsgd-24} 
observed that the approach can be adapted to work with clique-based separators, as follows. 
Let~$\graph=(V,E)$ be the graph for which we want to construct a distance oracle.
\begin{itemize}
\item Construct a clique-based separator~$\sep$ for~$\graph$, and let~$A,B\subseteq V\setminus S$
      be the two parts of the partition given by~$\sep$. For each node
      $v\in V$ and each clique $C\in \sep$, store the distance~$d(v,C) := \min \{ d(v,u) : u\in C\}$,
      where~$d(u,v)$ denotes the hop-distance from $s$ to $t$ in~$\graph$.
\item Recursively construct distance oracles for the subgraphs induced by
      the parts $A$ and~$B$.
\end{itemize}
Now suppose we want to answer a distance query with nodes~$s,t\in V$.
Let~$d^* := \min \{ d(s,C) + d(t,C): C\in \sep\}$. If $s$ and $t$ do not lie in the same
part---that is, we do not have~$s,t\in A$ or~$s,t\in B$---then we report $d^*$.
Otherwise, $s$ and $t$ lie in the same part of the partition, say~$A$.
Then we report the minimum of $d^*$ and the distance we obtain by querying the
recursively constructed oracle for~$A$.

This distance oracle uses $O(n\cdot s(n))$ storage, where $s(n)$ is the size of
the separator, and it has $O(s(n))$ query time, assuming $s(n)=\Omega(n^{\beta})$
for some constant $\beta>0$. The reported distance is either the exact distance
$d(s,t)$, or it is $d(s,t)-1$. The additive error of~$1$ is because we do
not know if $s$ and $t$ can reach the same node of some clique~$C$ with
paths of length $d(s,C) $ and $d(t,C)$, respectively---we may have to 
use an edge inside~$C$ to connect these paths.
We observe that the same approach can be used in combination with star-based
(or biclique-based) separators. The only difference is that we now get an additive 
error of at most~$2$, because we may need two additional 
edges inside a star (or biclique) in the separator. We obtain the following result.
\begin{corollary}\label{cor:oracle}
Let $V$ be a $c$-colored set of pseudo-segments, where $c$ is a fixed constant.
There is an almost-exact distance oracle for $\ig[V]$ that uses $O(n\sqrt{n})$
storage and can report the hop-distance between any
two nodes $s,t\in V$, up to an additive error of~$2$, in $O(\sqrt{n})$ time.
\end{corollary}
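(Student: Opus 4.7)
The plan is to adapt the classical recursive separator-based distance oracle of Arikati et al.~\cite{DBLP:conf/esa/ArikatiCCDSZ96}, as reformulated by Aronov, De~Berg, and Theocharous~\cite{abt-cbsgd-24} for clique-based separators, and simply plug in the star-based separator of Theorem~\ref{thm: seperator for pseudo segments}. Concretely, I would first invoke that theorem on $V$ to obtain a star-based separator $\sep$ of size $O(\sqrt{n})$ together with a balanced partition $V\setminus \sep = A\cup B$ that has no $\ig[V]$-edge between $A$ and $B$. For each segment $v\in V$ and each star $\sep_i\in \sep$, I would precompute and store $d(v,\sep_i):=\min\{d(v,u):u\in \sep_i\}$, where $d(\cdot,\cdot)$ denotes hop-distance in $\ig[V]$; this yields a table of $O(n\sqrt{n})$ entries, filled by one multi-source BFS per star. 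I would then recurse on the intersection graphs induced by $A$ and by $B$, both of which are again intersection graphs of $c$-colored pseudo-segments.

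To answer a query with nodes $s,t\in V$, I would compute $d^*:=\min_{\sep_i\in \sep}\bigl(d(s,\sep_i)+d(t,\sep_i)\bigr)$ in $O(\sqrt{n})$ time and, if $s$ and $t$ lie in the same part, additionally query the corresponding sub-oracle, returning the minimum of the two values.

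The main work lies in the error analysis: I would argue that the returned value always lies in $[\,d(s,t)-2,\,d(s,t)\,]$. For the upper bound on the returned value: if some shortest $s$-$t$ path crosses $\sep$ at a node $u\in \sep_i$, then $d^*\leq d(s,\sep_i)+d(t,\sep_i)\leq d(s,u)+d(u,t)=d(s,t)$; otherwise $s$ and $t$ lie in the same part and, since removing $\sep$ disconnects $A$ from $B$, no shortest $s$-$t$ path leaves that part, so the recursive oracle returns $d(s,t)$ exactly. For the lower bound: taking witness nodes $u_1,u_2\in \sep_i$ realizing $d(s,\sep_i)$ and $d(t,\sep_i)$, and using that any two nodes of a star lie at hop-distance at most $2$ via its center, I obtain a walk of length $d^*+2$ from $s$ to $t$, hence $d(s,t)\leq d^*+2$. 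Recognizing that the additive error is $2$ rather than $1$ is the only qualitative change from the clique-based analysis and reflects the worst case of traversing a star leaf-center-leaf; this is the main subtlety, as the rest of the argument is entirely routine.

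For the resource bounds, the storage recurrence $T(n)\leq T(|A|)+T(|B|)+O(n\sqrt{n})$ with $|A|,|B|\leq \tfrac{2n}{3}$ and $|A|+|B|\leq n$ solves to $T(n)=O(n\sqrt{n})$: at depth $k$ the sub-problem sizes sum to at most $n$ and each is at most $(2/3)^k n$, so the total work at that depth is $O\bigl(((2/3)^k n)^{1/2}\cdot n\bigr)$, which sums geometrically to $O(n^{3/2})$. The query descends along a single root-to-leaf path, spending $O(\sqrt{n_k})\leq O((2/3)^{k/2}\sqrt{n})$ at depth $k$, again summing to $O(\sqrt{n})$.
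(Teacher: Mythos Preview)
Your proposal is correct and follows essentially the same approach as the paper: the paper simply observes that the recursive separator-based oracle of Arikati~\etal~\cite{DBLP:conf/esa/ArikatiCCDSZ96}, as adapted by Aronov, De~Berg, and Theocharous~\cite{abt-cbsgd-24} to clique-based separators, works verbatim with star-based separators, the only change being that the additive error rises from~$1$ to~$2$ because one may need a leaf--center--leaf traversal inside a star. Your write-up is in fact more detailed than the paper's; the one small imprecision is the phrase ``the recursive oracle returns $d(s,t)$ exactly''---by induction it returns a value in $[d(s,t)-2,d(s,t)]$, but since you are establishing the upper bound $\leq d(s,t)$ at that point, only the inductive upper bound is needed and the argument goes through.
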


\section{Extension to $c$-oriented polygons and string graphs}
\label{sec:polygons-and-strings}

\subparagraph{$c$-Oriented polygons.}
Let $\cP=\{P_1,\ldots,P_n\}$ be a collection of $c$-oriented polygons, each 
with a constant number of edges, where $c$ is a fixed constant. The polygons may have holes.
We assume that the polygons in $\cP$ are in general position.
In particular, no three sides meet in a common point 
and no endpoint of one side lies on another side. The only exception is when two sides belong to the same polygon, in that case the two sides may share an endpoint. 
This assumption is without loss of generality, as it can always be ensured by perturbing the polygons slightly.

The idea is to create a weighted collection $V$ of segments 
to which we can apply Theorem~\ref{thm: seperator for pseudo segments},
and then use the resulting
separator~$\sep_V$ to construct a separator $\sep$ for~$\ig[\cP]$.
The set $V$ is created as follows.
\begin{itemize}
\item First, we add each side of every polygon $P_i \in \cP$ to~$V$. 
      For each polygon~$P_i$, we pick an arbitrary
      side as its \emph{representative side}, which we give 
      weight~$\tfrac{1}{n}$; other sides of $P_i$ are given weight~$0$. 
\item Second, to handle holes, we add \emph{connecting segments} to $V$. These segments will
      always have weight~$0$. All connecting segments will have orientation $\phi$, where $\phi$ be an orientation that is not used by any segment in $V$. We assume wlog that that $\phi$ is vertical.
      We handle each polygon separately. For each hole $H$ belonging to some polygon $P_i$ do the following: Take the topmost point $h_i$ of $H$, and let $\rho_i$ be a ray emanating from~$h_i$ with orientation~$\phi$. Let $h_i'$ be the point where $\rho_i$ intersects either another hole from $P_i$ or the outer boundary of $P_i$ for the first time. We add the connecting segment $h_i h_i'$ to $V$.
\item Finally, to handle the containment of polygons within other polygons, 
      we add so-called \emph{containment segments} to $V$. These segments will
      always have weight~$0$. For each polygon~$P_i$ let $\C_i \subseteq \cP$ be the set of 
      polygons that fully contain~$P_i$. Take a point~$x_i\in P_i$, and let $\C'_i \subseteq \cP$ be the set of polygons that contain $x_i$. Observe that $\C_i \subseteq \C'_i$. Let $\rho_i$ be a ray emanating from~$x_i$ with orientation~$\phi$.
      For each $P_j\in \C'_i$, let $y_j$ be the point where $\rho_i$ leaves $P_j$ for the first time. Let $j^*$ be such that $y_{j^*}$ is the last point among the points~$y_j$---in other words, the one furthest from $x_i$---and define $x'_i := y_{j^*}$.
      We now add $x_i x_i'$ as a containment segment to~$V$.
      Note that the containment segment $x_i x'_i$ 
      intersects the boundaries of all polygons in~$\C_i$, because it intersects the boundaries of all polygons in $\C_i'$. Moreover, $x_i x'_i$ is
      completely contained in $P_{j^*}$. Figure \ref{fig:containment} shows examples of containment segments. 

\end{itemize}
\begin{figure}[t]
\begin{center}
\includegraphics{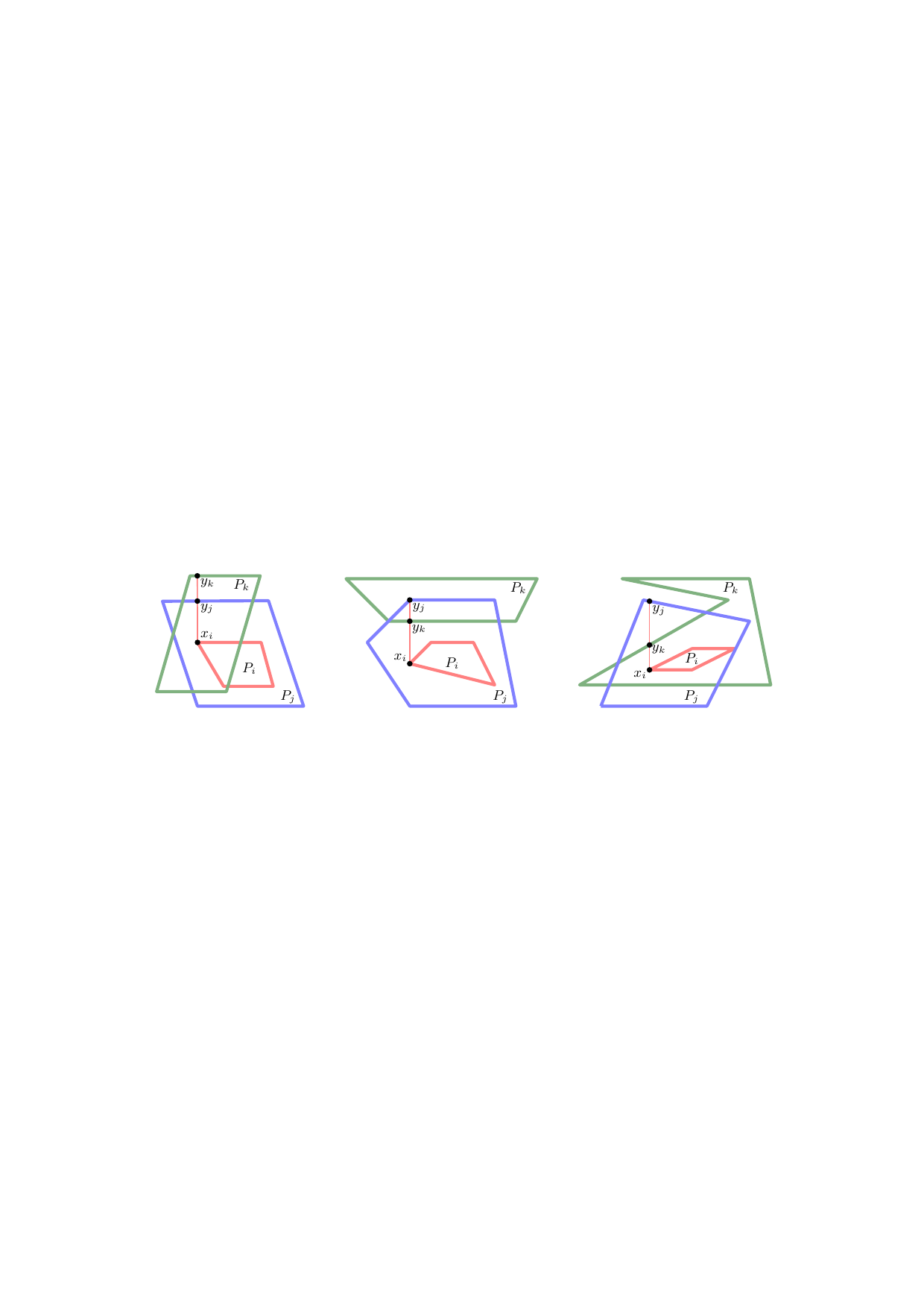}
\end{center}
\caption{
Left: The thin line is the containment segment for $P_i$. In this case $y_k=x_i'$. Note that while $P_k$ does not contain the $P_i$, it does contain $x_i$. Middle: The containment segment of $P_i$ can intersect $P_k$, even though $P_k$ does not intersect $P_i$. Right: The containment segment stops after intersecting the boundary of every polygon which contains $x_i$. Extending the containment segment could lead to further intersections with $P_k$. However, only the first intersection between the containment segment and a polygon is considered in the construction.}
\label{fig:containment}
\end{figure}

Because of the general position assumption, we can partition $V$ into color classes~$V_1,\ldots,V_{c+1}$ based on the 
orientation of the segments. We let $V_1$ contain the segments of orientation~$\phi$, 
which will be important later. The total weight of the segments is~$1$, 
so we can apply Theorem~\ref{thm: seperator for pseudo segments}. 
Let $\sep_V$ be the resulting
separator for $\ig[V]$, and let $A_V$ and $B_V$ be the two parts of weight
at most~$\frac{2}{3}$ into which $\sep_V$ splits $V\setminus \sep_V$.
We construct a separator $\sep$ and parts~$A,B$ for $\ig[\cP]$ as follows.

For each star in $\sep_V$ we consider its center~$v$. If $v$ is a side of
a polygon $P_i\in \cP$ or one of its connecting segments then we add $\mystar(P_i)$ to~$\sep$, where
$\mystar(P_i)$ is the subgraph of~$\ig[\cP]$ consisting of $P_i$ and its incident edges.
If $v$ is a containment segment~$x_i x'_i$
that we generated for polygon~$P_i\in \cP$, then let $P_{j^*}$ be the polygon that fully contains $x_ix'_i$ 
and where $x'_i\in \bd P_{j^*}$. We add $\mystar(P_{j^*})$ to~$\sep$.  
As before, we remove duplicate stars and we remove polygons from stars 
to ensure that each polygon is in at most one star.
To create the parts $A,B$, we consider the representative sides of the 
polygons $P_i\in\cP$ that are not in a star in~$\sep$. If the representative
side is in $A_V$, then we put $P_i$ in part~$A$; otherwise we put $P_i$ in part~$B$.

\subparagraph{\rm\em The analysis.}
Proving the size property, namely that $|\sep|=O(\sqrt{n})$, is easy. Indeed, 
per polygon we put $O(1)$ sides, $O(1)$ connecting segments, and at most one containment segment into $V$.
Hence, $|V| =O(n)$ and thus $|S|=|\sep_V| = O\left(\sqrt{n}\right)$. 
It remains to prove the separation property.
\medskip

Since the total weight of $A_V$ and of $B_V$ are both at most~$\tfrac{2}{3}$ and
each representative side has weight~$\tfrac{1}{n}$, it follows that
$|A|\leq \tfrac{2n}{3}$ and $|B|\leq \tfrac{2n}{3}$.  
Next, we prove that there are no edges from $A$ to $B$ in~$\ig[\cP]$.
We need the following lemma.
\begin{lemma}\label{lem:consistent}
    Consider a polygon~$P_i \in\cP$. If $P_i\in A$ (resp.~$P_i \in B$), then all sides and connecting segments of $P_i$ are in $A_V$ (resp.~$B_V$). 
\end{lemma}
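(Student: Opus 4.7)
The plan is to prove Lemma~\ref{lem:consistent} by contradiction, focusing on the case $P_i \in A$ (the case $P_i \in B$ is symmetric). It suffices to establish two things: (a) no side or connecting segment of $P_i$ lies in $\sep_V$, and (b) given (a), the sides and connecting segments of $P_i$ cannot be split between $A_V$ and $B_V$. Together these force all sides and connecting segments of $P_i$ into the same part as the representative side, which by assumption is~$A_V$.

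For (a), suppose some side or connecting segment $e$ of $P_i$ lies in $\sep_V$, so $e$ is a vertex of some star $T \in \sep_V$. If $e$ is the center of $T$, then by construction $\mystar(P_i)$ is placed into~$\sep$, so $P_i$ belongs to a star in~$\sep$, contradicting $P_i \in A$. If instead $e$ is a leaf of $T$, let $u$ be the center of $T$. Then $u$ is either a side or connecting segment of some polygon $P_k$, in which case the construction adds $\mystar(P_k)$ to $\sep$, or $u$ is a containment segment that triggers adding $\mystar(P_{j^*})$ for the enclosing polygon $P_{j^*}$. Call the resulting polygon $P_\ell$ in either subcase. The key observation is that $e \subseteq P_i$ (sides lie on $\bd P_i$, connecting segments are constructed inside $P_i$) and $u \subseteq P_\ell$ (sides and connecting segments lie in their owning polygon, and by its definition a containment segment lies in the enclosing polygon $P_{j^*}$). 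Hence any point of $e \cap u$ witnesses $P_i \cap P_\ell \neq \emptyset$, so $P_i$ is a neighbor of $P_\ell$ in $\ig[\cP]$ and a member of $\mystar(P_\ell)$. The duplicate-removal step only thins out multi-membership, leaving $P_i$ in some star of~$\sep$, contradicting $P_i \in A$.

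For (b), I exploit connectivity inside $\ig[V]$. Consecutive sides of $P_i$ share an endpoint (the sole exception allowed by general position), so they intersect and the sides of $P_i$ form a cycle in $\ig[V]$. By part~(a) every side lies in $A_V \cup B_V$, and since the separator leaves no edges between $A_V$ and~$B_V$, the whole cycle must lie in a single part. Because the representative side of $P_i$ lies in $A_V$ by the definition of $A$, every side of $P_i$ lies in~$A_V$. Finally, a connecting segment of $P_i$ has both endpoints on the boundary of $P_i$ and therefore meets at least one side of~$P_i$; by (a) the connecting segment is in $A_V \cup B_V$, and by the same no-cross-edges property it must join its intersecting side in~$A_V$.

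The most delicate part is subcase (a) when $u$ is a containment segment: one has to carefully invoke the definition of $x_i x_i'$ to guarantee that $u \subseteq P_{j^*}$ and that $P_{j^*}$ is precisely the polygon whose star the construction inserts into $\sep$, so that the intersection point of $e$ and $u$ really lies in both $P_i$ and $P_{j^*}$. Everything else is a straightforward case distinction plus the cycle-connectivity argument.
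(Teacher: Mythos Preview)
Your approach is essentially the same as the paper's: first show that no side or connecting segment of~$P_i$ can lie in~$\sep_V$ (else~$P_i$ would end up in a star of~$\sep$), and then use connectivity to force everything into the same part as the representative side. Your case analysis in~(a), including the containment-segment subcase, matches the paper's reasoning.

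There is, however, a genuine gap in part~(b). You write that ``the sides of~$P_i$ form a cycle in~$\ig[V]$,'' but the polygons in~$\cP$ may have holes, so the sides form \emph{several} disjoint cycles---one for the outer boundary and one for each hole. Your argument only places the boundary component containing the representative side into~$A_V$; a side on a hole boundary is not reached. Your final sentence then tries to pull each connecting segment into~$A_V$ via ``its intersecting side,'' but that side may itself lie on a hole cycle whose membership in~$A_V$ has not been established, so the argument is circular. The fix is exactly what the paper does (and is the very reason connecting segments were introduced): observe that the subgraph of~$\ig[V]$ induced by \emph{all} sides and connecting segments of~$P_i$ is connected, because each hole's connecting segment links its topmost vertex to another boundary component closer to the outer boundary. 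Running the connectivity argument on this single connected subgraph, rather than on the side-cycles separately, closes the gap.
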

\begin{proof}
For ease of reading we consider the connecting segments of $P_i$ to be sides of $P_i$ within this proof.
First consider the case $P_i\in A$. Suppose for a contradiction that not all sides
of~$P_i$ are in $A_V$. Hence, $P_i$ has a side~$s\in\sep_V$ or a side $s'\in B_V$. 
We claim that in the latter case $P_i$ must also have a side $s\in\sep_V$. To see this,
consider the representative side $s_i$ of~$P_i$.
Note that $s_i\in A_V$ since we put $P_i$ into~$A$. 
Observe that the subgraph of $\ig[V]$ induced by the sides of $P_i$ is connected, 
because of the connecting segments that we added.
Thus there is a path in~$\ig[V]$ connecting $s'$ to $s_i$ and only using sides of $P_i$. Let us consider such a path. Since $s_i\in A_V$
and $s'\in B_V$, one of the nodes on this path must be in~$\sep_V$ and be a side of~$P_i$. This establishes our claim.

It remains to prove that if $P_i$ has a side~$s$ in $\sep_V$, then $P_i$ is a node in some star in $\sep$. If~$s$ is
the center of a star in $\sep_V$, then by construction $P_i$ is the center of a star
in $\sep$. If $s$ is a non-center node in some $\mystar(s')\in \sep_V$
we distinguish two cases.
If $s'$ is a side of some polygon~$P_j$, 
then~$P_j$ intersects~$P_i$ and~$\sep$ contains~$\mystar(P_j)$. Consequently, $P_i$ is a non-center node in $\mystar(P_j)$.
If $s'$ is a containment segment $x_j x_j'$ where $x_j'$ lies on some polygon $P_k$, then $P_k$ intersects $P_i$ because $x_j x_j'$ is fully contained in $P_k$. Because $\mystar(s')\in \sep_V$ it follows that $\mystar(P_k) \in \sep$, and thus $P_i$ is a node in some star in $\sep$.
We have reached a contradiction in all cases, it cannot be the case that $P_i\in A$ and that $P_i$ is a node in some star in $\sep$.  
We conclude that all sides of $P_i$ are in $A_V$.

Now consider the case $P_i\in B$. We can follow the proof for the case $P_i\in A$
if the representative side~$s_i$ of $P_i$ is in $B_V$. This must indeed
be the case: we cannot have $s_i\in A_V$ because then we would
have put $P_i$ into~$A$, and $s_i$ cannot be a node in a star
in~$\sep_V$ because then $P_i$ would have been in a star in~$\sep$, as has been shown above.
Hence, the lemma is also true if $P_i\in B$. 
\end{proof}

\begin{figure}[t]
\begin{center}
\includegraphics{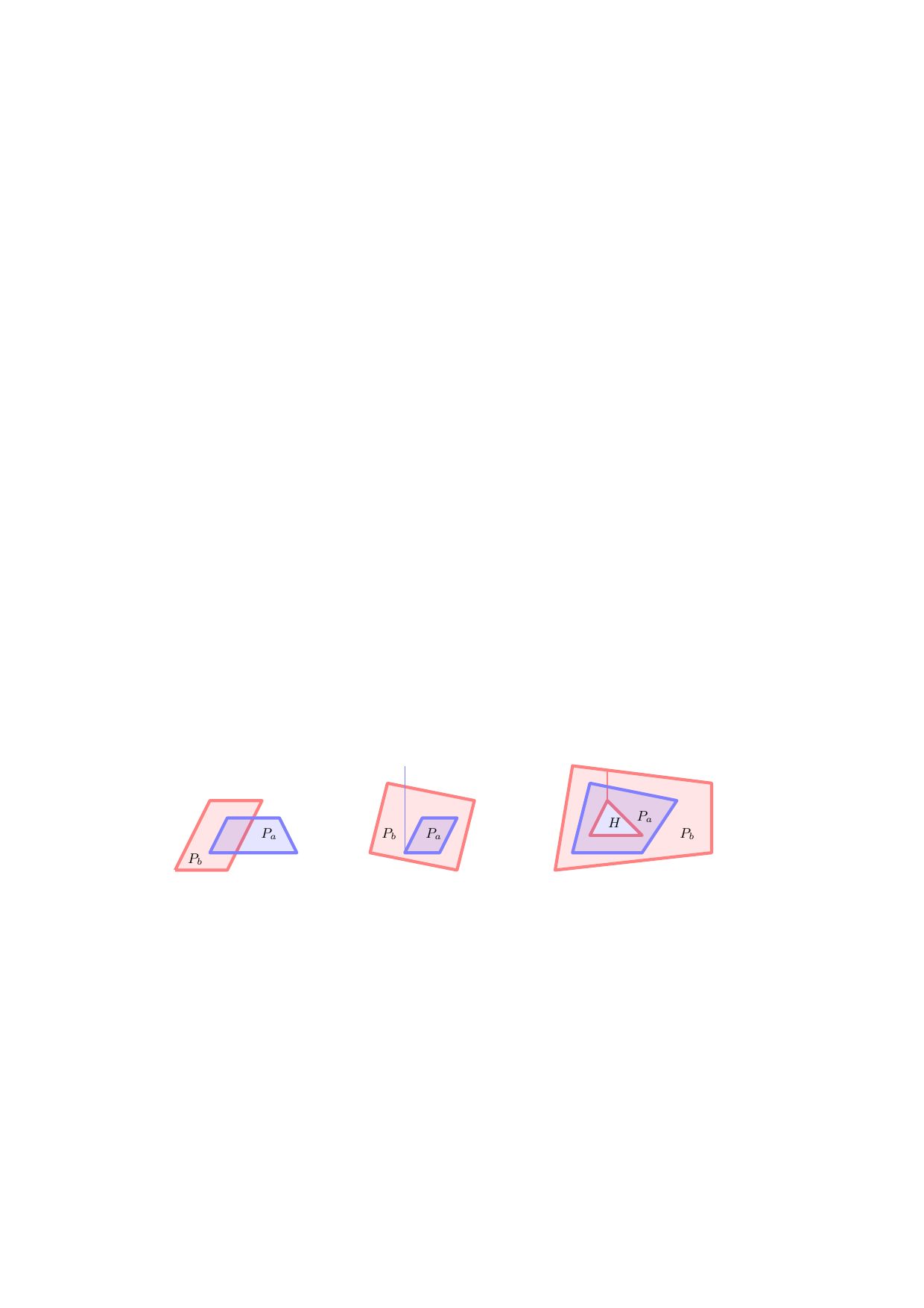}
\end{center}
\caption{Left: two intersecting polygons. Middle: $P_a$ is contained in $P_b$. The thin line represents the containment segment of $P_b$. The containment segments intersects $P_b$ but does not have to end at the boundary of $P_b$. Right: $P_a$ lies in the outer boundary of $P_b$. Polygon $P_b$ has a hole $H$ so~$P_a \not \subset P_b.$ The thin line represents the connecting segment of $H$.}
\label{fig:lemma12}
\end{figure}

We can now prove that there are no edges from $A$ to $B$ in $\ig[\cP]$.
\begin{lemma} \label{lemma:separate:polygons}
No polygon $P_a \in A$ intersects any polygon $P_b \in B$.
\end{lemma}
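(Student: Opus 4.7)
The plan is to proceed by contradiction: assume $P_a \in A$, $P_b \in B$, and $P_a \cap P_b \neq \emptyset$, and split on the way the two polygons can meet. First I handle the case that $\bd P_a$ and $\bd P_b$ cross: then some side $s_a$ of $P_a$ meets some side $s_b$ of $P_b$, Lemma~\ref{lem:consistent} places them in $A_V$ and $B_V$ respectively, and the edge $(s_a, s_b)\in \ig[V]$ directly contradicts the separator property of $\sep_V$ guaranteed by Theorem~\ref{thm: seperator for pseudo segments}.

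The interesting case is when the boundaries are disjoint, so one polygon must be contained in the outer region of the other (the sub-case where one lies inside a hole of the other is inconsistent with $P_a \cap P_b \neq \emptyset$). WLOG $P_a$ lies in the outer region of $P_b$, whence $P_b \in \C_a \subseteq \C'_a$. I then use the containment segment $v_a = x_a x'_a$ of $P_a$ as a witness. Because $P_b \in \C'_a$, the segment $v_a$ crosses $\bd P_b$ through some side $s_b$; because $P_a$'s exit along the ray precedes $P_b$'s (and hence that of $P_{j^*_a}$), $v_a$ also crosses $\bd P_a$ through some side $s_a$. By Lemma~\ref{lem:consistent} we have $s_a \in A_V$ and $s_b \in B_V$. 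Since $v_a$ has neighbors on both sides of the partition in $\ig[V]$, the separator property forces $v_a \in \sep_V$.

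This reduces the argument to analyzing the star $\mystar(v') \in \sep_V$ that covers $v_a$. If $v_a$ is itself the center ($v'=v_a$), then by the polygon construction $\mystar(P_{j^*_a})\in \sep$; since $x_a \in P_a \cap P_{j^*_a}$ we get $P_a \in \mystar(P_{j^*_a})$, so $P_a$ ends up in some star of $\sep$ after deduplication, contradicting $P_a \in A$. If $v_a$ is a non-center, then because the $\phi$-oriented segments form an independent set in $\ig[V]$, the center $v'$ cannot be $\phi$-oriented and must be a side $s_c$ of some polygon $P_c$, so $\mystar(P_c)\in\sep$. I then aim to show that $P_c$ meets $P_a$ or $P_b$, which places one of them in $\mystar(P_c)$ and hence in some star of $\sep$, contradicting $P_a \in A$ or $P_b \in B$. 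The crux---and the main obstacle---is controlling the location of the intersection point $p:=v_a\cap s_c$ along $v_a$: because $v_a\subset P_{j^*_a}$ traverses $P_a$ and then $P_b$ in order from $x_a$, I expect to show that $p$ must lie in the portion of $v_a$ contained in $P_b$, so that $s_c$ has a point in $P_b$ and consequently $P_c\cap P_b\neq\emptyset$ (either $P_c\subset P_b$ or $\bd P_c$ crosses $\bd P_b$). Ruling out the troublesome scenario in which $p$ sits past the first exit of $v_a$ from $P_b$ is the delicate step; I plan to handle it by combining Lemma~\ref{lem:consistent} with the structure of $\sep_V$ to constrain which segments can serve as centers of its stars, propagating the contradiction back through the polygons $v_a$ traverses.
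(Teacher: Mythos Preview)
Your argument has two genuine gaps.

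\textbf{Missing case.} Your case split when the boundaries are disjoint is incomplete. You rule out ``$P_a$ lies inside a hole of $P_b$'' and then jump to ``$P_a$ lies in the outer region of $P_b$, whence $P_b\in\C_a$''. But if $P_a$ is enclosed by the outer boundary of $P_b$ and \emph{contains} a hole $H$ of $P_b$ in its interior, then $P_a\cap P_b\neq\emptyset$ and the boundaries are disjoint, yet $P_a\not\subset P_b$, so $P_b\notin\C_a$; and if the chosen point $x_a$ happens to lie in the hole then $P_b\notin\C'_a$ either, so the containment segment of $P_a$ need not meet $\bd P_b$ at all. The paper treats this as a separate case and resolves it via the \emph{connecting segments}, which you never invoke: the connecting segment attached to $H$ runs from $H$ towards the rest of $P_b$'s boundary and must therefore cross $\bd P_a$, yielding (via Lemma~\ref{lem:consistent}) an edge of $\ig[V]$ between $A_V$ and $B_V$.

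\textbf{The non-center sub-case does not close.} In the full-containment case you correctly deduce $v_a\in\sep_V$, and your treatment when $v_a$ is the center of its star is fine. But when $v_a$ is a leaf of $\mystar(s_c)$ for a polygon side $s_c$ of some $P_c$, the only star placed in $\sep$ is $\mystar(P_c)$, and your expectation that $p=v_a\cap s_c$ lies in $P_b$ is simply false whenever $P_{j^*_a}\neq P_b$: the point $p$ can sit on the portion of $v_a$ beyond the first exit from $P_b$, with $P_c$ a small polygon disjoint from both $P_a$ and $P_b$. Nothing in Lemma~\ref{lem:consistent} or in the bare separator property of $\sep_V$ lets you ``propagate back'' from $P_c$ to $P_a$ or $P_b$; the plan you sketch is not an argument.

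The paper avoids this detour entirely by exploiting a design choice you did not use: the $\phi$-oriented segments (containment and connecting) are deliberately placed in the \emph{first} color class $V_1$, so each such segment is itself a single active fragment lying in $\actF_1\subset\actF_{<i}$ for every $i>1$. Lemma~\ref{lem: assignment properties of blocks}, condition~(i), then applies directly with $f=v_a$: taking $v$ to be a side of $P_b$ (in $B_V$ by Lemma~\ref{lem:consistent}) forces $f\in B_{\cH}$, while taking $v$ to be a side of $P_a$ (in $A_V$) forces $f\in A_{\cH}$. This is an immediate contradiction at the contact-graph level, with no need to analyse which star of $\sep_V$ happens to contain $v_a$.
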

\begin{proof}
    Suppose for a contradiction that $P_a$ and $P_b$ intersect. We distinguish three cases, which are shown in Figure~\ref{fig:lemma12}.
    \begin{itemize}
    \item \emph{The boundaries of $P_a$ and $P_b$ intersect.}
           Let $s$ and $s'$ be sides of $P_a$ and $P_b$, respectively, that intersect. 
           Because $P_a\in A$ we have $s\in A_V$ by Lemma~\ref{lem:consistent}.
           Similarly, we have $s'\in B_V$.
           But this contradicts that $\sep_V$ is a separator for~$\ig[V]$ with parts~$A_V,B_V$.
    \item \emph{The boundaries of $P_a$ and $P_b$ do not intersect and one of the polygons is fully contained in the other.}
            Assume wlog that $P_a \subset P_b$. Consider the containment segment $s=x_a x'_a$.
            By construction of the containment segments, polygon~$P_b$
            has a side~$s'$ that intersects~$s$. 
            Recall that the containment segments were put into the set $V_1$ that
            was handled first by the algorithm from the previous section.
            Hence, there will be a fragment $f\in \actF$ that is identical to $s$.  
            Let $i>1$ be such that $s'\in V_i$. Then $f$ satisfies condition~(i)
            of Lemma~\ref{lem: assignment properties of blocks}, with $s'$
            playing the role of~$v$. Since $s'$ is a side of $P_b\in B$,
            we have $s'\in B_V$ by Lemma~\ref{lem:consistent}. Thus, 
            Lemma~\ref{lem: assignment properties of blocks} implies that $s=f\in B_V$.

            With a similar argument as above we will show that~$s=f \in A_V$. There exists a side~$s''$ of $P_a$ that intersects $s$, namely the side of $P_a$ that contains an endpoint of $s$. From Lemma~\ref{lem:consistent} it follows $s''  \in  A_V$. It must be that $s'' \in V_j$ for some $j >1$. Then $f$ also satisfies condition~(i) of Lemma~\ref{lem: assignment properties of blocks} with $s''$ playing the role of $v$, so that $s=f \in A_V$. But then the sets $A_V$ and $B_V$ are not disjoint, which contradicts that $\sep_V$ is a separator.  
    \item \emph{The boundaries of $P_a$ and $P_b$ do not intersect and none of the polygons is fully contained in the other.}        
            In this case, one of the polygons must be enclosed by the outer boundary of the other polygon. 
            Assume wlog that $P_a$ is enclosed by the outer boundary of~$P_b$.
            Since~$P_a \not\subset P_b$, there must the be a hole $H$ of $P_b$ that is contained in $P_a$. Because the subgraph of $\ig[V]$ induced by the sides and connecting segments of $P_b$ is connected, 
            there must thus be a side or connecting segment of $P_B$ that intersects the outer boundary of~$P_a$. Because the boundaries of $P_a$ and $P_b$ do not intersect, this must be some connecting segment $s_b$. Let $s_a$ be a side of $P_a$ intersected by~$s_b$. It follows from Lemma~\ref{lem:consistent} that~$s_a \in V_A$ and $s_b \in V_B$, which contradicts that $\sep_V$ is a separator.  \qedhere
    \end{itemize}
\end{proof}
\medskip
Putting everything together, we obtain the following theorem. The runtime guarantee for 
computing the separator is proven in Section \ref{sec:algorithms}.
\begin{theorem} \label{thm:polygons}
Let $P$ be a set of $n$ constant-complexity $c$-oriented polygons in the plane. 
Then the intersection graph~$\ig[P]$ has a star-based separator of size $O(\sqrt{n})$,
which can be computed in $O(n \log n)$ time. Moreover, there is an almost exact distance oracle for
$\ig[V]$ that uses $O(n\sqrt{n})$ storage and that can report the hop-distance between any
two nodes $s,t\in P$, up to an additive error of~$2$, in $O(\sqrt{n})$ time.
\end{theorem}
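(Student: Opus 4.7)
The plan is to bundle together the three assertions of the theorem—the size bound, the running time, and the distance oracle—from the machinery already built up in the excerpt.

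For the size bound, I would first argue that the auxiliary segment set $V$ has $|V|=O(n)$: each polygon contributes $O(1)$ sides (since polygons have constant complexity), $O(1)$ connecting segments (one per hole, and the number of holes per polygon is $O(1)$), and at most one containment segment. The set $V$ is $(c+1)$-colored by orientation, so Theorem~\ref{thm: seperator for pseudo segments} yields a star-based separator $\sep_V$ for $\ig[V]$ of size $O(\sqrt{|V|})=O(\sqrt{n})$. Since the translation rule puts at most one star of $\ig[\cP]$ per star of $\sep_V$, we get $|\sep|=O(\sqrt{n})$. The balance property $|A|,|B|\le \tfrac{2n}{3}$ follows because the representative sides carry weight $\tfrac{1}{n}$ each and $A_V,B_V$ have total weight at most~$\tfrac{2}{3}$. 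Lemma~\ref{lemma:separate:polygons} then certifies that no edge of $\ig[\cP]$ runs between $A$ and~$B$, establishing the separator property.

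For the $O(n\log n)$ running time claim, I would note that the bulk of the work reduces to constructing $V$ and then invoking the algorithmic version of Theorem~\ref{thm: seperator for pseudo segments}. The sides of the polygons are available in $O(n)$ time. The connecting segments can be obtained by a vertical ray shooting from the topmost vertex of each hole, which is a standard $O(n\log n)$ plane-sweep or trapezoidal-decomposition task. The containment segments can likewise be computed by choosing an interior point $x_i$ of each $P_i$, shooting a vertical ray, and selecting the farthest exit point among the polygons enclosing $x_i$; this again reduces to a plane sweep over $O(n)$ objects in $O(n\log n)$ time. The subsequent call to the $c$-oriented segment separator algorithm costs $O(n\log n)$, and translating $\sep_V$ into $\sep$ and distributing polygons into $A,B$ is linear in the output. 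The subtle point I expect to be most delicate is verifying that the ray-shooting subroutines behave correctly with the chosen orientation~$\phi$ under the general position assumption, but the details can be imported from Section~\ref{sec:algorithms}.

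For the distance oracle, I would directly invoke the recursive scheme described around Corollary~\ref{cor:oracle}, applied recursively to $\ig[\cP]$ using the separator just constructed. At each level we store, for every node $v$ of the current subgraph and every star $S_i\in\sep$, the value $d(v,S_i)=\min\{d(v,u):u\in S_i\}$; a query $(s,t)$ returns the minimum of $\min_i\bigl(d(s,S_i)+d(t,S_i)\bigr)$ and the answer obtained by recursing on the part containing both $s$ and $t$ (if any). Because our separators have size $O(\sqrt{n})$, the same recurrence as in Corollary~\ref{cor:oracle} yields $O(n\sqrt{n})$ storage and $O(\sqrt{n})$ query time. The additive error of~$2$ arises exactly as noted there: a shortest $s$-to-$t$ path passing through a star $S_i$ may enter $S_i$ at one leaf, cross through the center, and exit at another leaf, so that concatenating the witnesses for $d(s,S_i)$ and $d(t,S_i)$ overcounts by at most two edges. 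Combining these three pieces yields Theorem~\ref{thm:polygons}.
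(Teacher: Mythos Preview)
Your proposal is correct and follows essentially the same route as the paper: reduce to the weighted segment separator of Theorem~\ref{thm: seperator for pseudo segments} via the auxiliary set $V$ of sides, connecting segments, and containment segments; invoke Lemma~\ref{lemma:separate:polygons} for correctness; defer the algorithmic details to Section~\ref{sec:algorithms}; and reuse the oracle scheme behind Corollary~\ref{cor:oracle}. Two minor imprecisions worth tightening: (i) translating $\sep_V$ into $\sep$ is not merely ``linear in the output''---populating each $\mystar(P_i)$ requires detecting which polygons intersect the centers, which the paper does with additional $O(n\log n)$ plane sweeps; (ii) your phrasing ``overcounts by at most two edges'' has the sign backwards: the reported value $d^*=d(s,S_i)+d(t,S_i)$ \emph{under}estimates $d(s,t)$ by at most~$2$, since a genuine $s$--$t$ path through $S_i$ may need up to two extra edges inside the star.
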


\noindent\emph{Remark.}
The theorem above is stated for $n$ constant-complexity $c$-oriented polygons. However, it
also holds in a more geneal setting, namely for a collection $\cP$ of polygons with $n$ edges in total.
We can then find a separator of size $O(\sqrt{n})$ and parts $A,B$ such that the number
of polygons in $A$ and $B$ is at most $\tfrac{2|\cP|}{3}$. Alternatively, we can guarantee
that the total number of edges of the polygons in~$A$ (and similarly for $B$) is at most~$\tfrac{2n}{3}$.
Finally, the theorem also works in a weighted setting.

\subparagraph{String graphs.}
A \emph{string graph} is the intersection graph of a set $V$ of curves in the 
plane~\cite{DBLP:journals/jct/EhrlichET76}---no conditions are put on the curves and,
in particular, any two curves in~$V$ can intersect 
arbitrarily many times. (But note that there is still at most one edge between 
the corresponding nodes in~$\ig[V]$.) It is known that for any set $U$ of connected
regions in the plane, there is a set $V$ of strings such that $\ig[U]$ and $\ig[V]$ are 
isomorphic~\cite{Lee-string-sep}. Thus, string graphs are the most general type of intersection graphs of 
connected regions in the plane. 
\medskip

Matou\v{s}ek~\cite{Matousek14}
proved that any string graph with $n$ nodes and $m$ edges has a (node-based) separator
of size~$O(\sqrt{m} \log m)$.\footnote{A paper by Lee~\cite{Lee-string-sep} claims that a separator of size $O(\sqrt{m})$ exists for string graphs, but Bonnet~\etal\cite{bonnet2024treewidthpolynomialmaximumdegree} note that there is an error in this paper. It is not yet known if the proof can be repaired.} Using this result we can obtain a star-based separator of sublinear
size for a string graph $\ig[V]$ using the following simple two-stage process. 
\begin{itemize}
\item \emph{Stage 1:} As long as there is a node $v\in V$ of degree at least $n^{1/3}/\log^{2/3} n$,
       remove $\mystar(v)$ from $\ig[V]$ and add it to the separator. 
       This puts at most $O(n^{2/3} \log^{2/3} n)$ stars into the separator. 
\item \emph{Stage 2:} Construct a node-based separator on the remaining string graph
      using Matou\v{s}eks method~\cite{Matousek14}, and put these nodes as singletons into the separator.
      Since the maximum degree after Stage~1 is~$O(n^{1/3}/\log ^{2/3} n)$, the remaining graph has
      $O(n^{4/3}/\log^{2/3} n)$ edges. Hence, in Stage~2 we put $O\left(\sqrt{\frac {n^{4/3}}{\log^{2/3}n }} \cdot  \log \left (\frac {n^{4/3}}{\log^{2/3}n }\right) \right)= O\left(n^{{2}/{3}} \log ^{{2}/{3} }n\right)$ stars
      into the separator.
\end{itemize}
By using the star-based separator exactly as in Section~\ref{section:distance:oracles}, this yields the following.
\begin{proposition} \label{prop:strings}
Let $V$ be a set of $n$ strings in the plane. Then the string graph~$\ig[V]$ has a star-based
separator of size $O(n^{2/3} \log ^{2/3}n)$. Moreover, there is an almost exact distance oracle for
$\ig[V]$ that uses $O(n^{5/3} \log ^{2/3}n)$ storage and that can report the hop-distance between any
two nodes $s,t\in V$, up to an additive error of~$2$, in $O(n^{2/3} \log ^{2/3}n)$ time.
\end{proposition}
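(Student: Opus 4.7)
The plan is to carry out the two-stage construction sketched just before the proposition and verify both the separator size bound and the oracle parameters. First I would execute Stage~1: while the current intersection graph contains a vertex $v$ of degree at least $\Delta := n^{1/3}/\log^{2/3}n$, add $\mystar(v)$ to the separator $\sep$ and delete $v$ together with all of its current neighbors. Each such step removes at least $\Delta$ vertices, so the stage halts after at most $n/\Delta = O(n^{2/3}\log^{2/3}n)$ iterations and produces at most this many vertex-disjoint stars. Let $V'$ denote the surviving vertex set at the end of Stage~1; since induced subgraphs of string graphs are string graphs, $\ig[V']$ is again a string graph and has maximum degree strictly less than $\Delta$, hence at most $m = O(n \cdot \Delta) = O(n^{4/3}/\log^{2/3}n)$ edges.

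Stage~2 applies Matou\v{s}ek's node separator theorem to $\ig[V']$, which yields a balanced vertex separator of size $O(\sqrt{m}\,\log m)$. Each such vertex is inserted into $\sep$ as a singleton (a degenerate star). The arithmetic gives
\[
\sqrt{m}\,\log m \;=\; \sqrt{n^{4/3}/\log^{2/3}n}\cdot O(\log n) \;=\; \bigl(n^{2/3}/\log^{1/3}n\bigr)\cdot O(\log n) \;=\; O\!\bigl(n^{2/3}\log^{2/3}n\bigr),
\]
matching the Stage~1 contribution, so $|\sep| = O(n^{2/3}\log^{2/3}n)$. To see that $\sep$ is balanced, observe that every vertex not covered by $\sep$ lies in $V'$ and that Matou\v{s}ek's separator partitions $V'\setminus\sep$ into pieces of size at most $\tfrac{2}{3}|V'|\le\tfrac{2n}{3}$; Stage~1 vertices are absorbed into the stars of~$\sep$, so they do not affect the component sizes.

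For the distance oracle, I would plug this star-based separator directly into the recursive construction recapped in Section~\ref{section:distance:oracles}: at each level, for every vertex $v$ of the current subgraph and every star $\sep_i$ in its separator, store $d(v,\sep_i)$, then recurse on the two balanced parts. The separator size $s(n) = \Theta(n^{2/3}\log^{2/3}n) = \Omega(n^{2/3})$ is polynomially large, so the standard recurrence resolves to $O(n\cdot s(n)) = O(n^{5/3}\log^{2/3}n)$ storage and $O(s(n)) = O(n^{2/3}\log^{2/3}n)$ query time. The additive error of $2$ carries over verbatim from Corollary~\ref{cor:oracle}, since any shortest path that is reconstructed via $d(s,\sep_i) + d(t,\sep_i)$ may need up to two extra edges inside the star $\sep_i$ to stitch the two half-paths together. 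There is no genuine obstacle here; the only delicate point is confirming the arithmetic that balances Stages~1 and~2 and verifying that Matou\v{s}ek's theorem is indeed applicable to the Stage~2 residual graph, both of which are immediate.
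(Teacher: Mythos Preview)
Your proposal is correct and follows essentially the same approach as the paper: the two-stage greedy-plus-Matou\v{s}ek construction with threshold $\Delta = n^{1/3}/\log^{2/3}n$, followed by plugging the resulting star-based separator into the recursive distance-oracle framework of Section~\ref{section:distance:oracles}. You add a few welcome details the paper leaves implicit (the disjointness of the Stage~1 stars, that induced subgraphs of string graphs are string graphs, and the explicit check of balancedness), but the argument is the same.
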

Recall that the distance oracle of Aronov, De~Berg~and Theocharous~\cite{bkmt-cbsgis-23} 
for geodesic disks in the plane has $O(n^{7/4+\eps})$ storage and $O(n^{3/4+\eps})$ query time.
Thus the distance oracle in Proposition~\ref{prop:strings} is more general (as it can handle
any string graph), has better storage, and better query time. The only downside is that the
additive error in our distance oracle is at most~2, while for their oracle it is at most~1.
\medskip

\noindent \emph{Remark.}
One may wonder if any graph, and not just any string graph, admits a sublinear
star-based separator, but this is not the case.
For example, 3-regular expanders do not admit sublinear node-based separators~\cite{Expanders_no_seperator}
and working with star-based separators instead of node-based separators
does not help in constant-degree graphs.

\section{Efficient algorithms to construct the separators}
\label{sec:algorithms}
In this section we provide algorithms that can compute star-based separators for $c$-oriented line segments and $c$-oriented polygons in $O(n \log n)$ time. We start by providing an algorithm for $c$-oriented line segments in general position. This algorithm will be used to create an algorithm for $c$-oriented polygons in general position. At the end of this section,
we explain how to handle degenerate cases.

\subsection{$c$-Oriented line segments}
\label{subsec:c-oriented-segments}
When $V$ is a set of $c$-oriented line segments, where $|V|=n$, then we can compute 
a star-based separator in $O(n\log n)$ time, as described next. 
The algorithm will follow the construction used in 
Theorem~\ref{thm: seperator for pseudo segments}, except that we implement Step~1, 
where we construct the active fragments, slightly differently. 
For now we assume the segments are in general position. More precisely, we assume 
that no two segments of the same orientation overlap. Note that other degenerate cases 
can exist as well. For instance, an endpoint of one segment can be contained in
another segment, or three or more segments may pass through the same point.
While we do not address these types of degeneracies explicitly in this section, 
the algorithm can easily be adapted to deal with them. 
How we handle overlapping segments 
will be explained in Section~\ref{subsec:algo-degnerate}.

\subparagraph{Implementation of Step~1.}
Let $V_1,\ldots,V_c$ be a partition of the segments in~$V$,  based on their orientation. 
Since the segments are in general position, no pair of segments from the same set~$V_i$ intersect.
As before, we construct the set $\actF=\actF_1\cup\cdots\cup\actF_c$ 
of active fragments by handling the subsets~$V_1,\ldots,V_c$ one by one.
We set $\actF_1 := V_1$, and create the sets $\actF_i$ for~$i>1$ as follows.

Assume wlog that the segments in $V_i$ are horizontal.  
Let $L_i$ be the set containing the fragments in $F_{<i}$ and the endpoints of 
segments in $V_{\geq i}$, where we consider the endpoints to be fragments of length~0. 
Note that $|L_i| \leq |F_{<i}| + 2n$.
We start by computing the \emph{horizontal decomposition} of $L_i$.
This is the subdivision of the plane obtained by drawing, for each endpoint of a fragment $f\in L_i$,
a \emph{horizontal extension} that extends from $f$ to the right and to the left
until a fragment in~$\actF_i$ or, if no fragment is hit, to infinity.
This decomposition, which we denote by $\hd_i$, can be computed in $O(|L_i|\log |L_i|)$ time
with a simple plane-sweep algorithm~\cite{bcko-cgaa-08}.
The horizontal decomposition~$\hd_i$ consists of at most $3|L_i|+1$ (possibly
unbounded) trapezoids~\cite[Lemma 6.2]{bcko-cgaa-08},
each of which is bounded by one or two horizontal extensions and by 
at most two fragments from~$\actF_i$.
We now compute the set $\actF_i$ of active fragments as follows.
\medskip

First, we compute all end fragments and add them to $\actF_i$. 
There are at most $|V_i|$ of these segments.
Since each endpoint of a segment $v\in V_i$ was used in the creation of the horizontal 
decomposition, the endpoint has horizontal extensions to the left end right. 
By selecting the neighbor in the direction of the other endpoint of~$v$, 
this allows us to determine the corresponding end fragment in constant time.

Next, we compute the active internal fragments.
Due to our assumption that the segments in $V_i$ are horizontal, each internal fragment that
should be created from a segment in $V_i$ consists of a horizontal segment that fully crosses 
some \emph{bounded} trapezoid $\Delta$ of $\hd_i$ from its left to its right edge. 
Note that the left and right edges of $\Delta$ are parts of fragments in~$\actF_{<i}$. 
To compute a suitable set of active internal fragments, we will therefore go through all bounded trapezoids using a plane sweep. For each such bounded trapezoid $\Delta$, we try to find a segment $v$ that fully crosses it.
If such a segment exists, then we add $v\cap \Delta$ as an active fragment to $\actF_i$.
Note that there can also be other segments $u\in V_i$ that fully cross $\Delta$,
but the fragments $u\cap \Delta$ are all equivalent to $v\cap\Delta$, 
so we do not need to make them active. 
We make at least one fragment
from each equivalence class active, which is sufficient for the correctness of our algorithm. We can make more than one fragment from an equivalence class active with this construction, however, this does not affect the correctness of the algorithm.  

To find the active internal fragments efficiently, we use a plane-sweep algorithm,
using a horizontal sweep line~$\ell$ that sweeps from top to bottom over the plane.
During the sweep, we maintain a set of trapezoids intersecting~$\ell$,
stored in left-to-right order in a balanced binary search tree~$\T$. When we 
encounter the top edge of a trapezoid, we insert it into~$\T$. 
When we encounter a segment~$v\in V_i$, we report all trapezoids that
are fully crossed by~$v$; using the tree~$\T$ this can be done in $O(\log |L_i|+k_v)$ time, 
where $k_v$ is the number of such trapezoids. For each reported trapezoid~$\Delta$,
we add $v\cap \Delta$ to $\actF_i$, after which we remove $\Delta$ from~$\T$. 
When we encounter a bottom edge of a trapezoid~$\Delta$, then we also delete $\Delta$
from~$\T$ (if it is still present).
Because we remove trapezoids which are found in a query, the sum over~$k_v$ for all $v\in V_i$ is bounded by $|L_i|$.
This way all active internal fragments can be computed in~$O(|L_i|\log |L_i|)$ time in total.

Since each segment in $V_i$ contributes at most two end fragments, while each of the at most $3|L_i|+1$ bounded trapezoids contributes at most one active fragment, we have 
\[
|\actF_{\le i}|\le|\actF_{<i}|+3|L_i|+1+2|V_i| \le 4|F_{<i}|+8n+1.
\]
We can now compute the total number of active fragments created from the following recurrence:
\[
    |\actF_{\leq i}| \leq \begin{cases}
    n  & \mbox{ if $i = 1$} \\
    4|\actF_{< i}| +8n +1   & \mbox{ if $1< i \leq c$} \end{cases}
    \]   
Solving this recurrence we get $|\actF_{\leq i}| = O(n\cdot4^i)$. 
Plugging in $i=c$ gives 
\[ 
|\actF|=|\actF_{\leq c}|=O(n\cdot4^c)=O(n).
\]

Because $|L_i|\le|F_{<i}|+2n$ and we only have a constant number of iterations 
of the algorithm that computes active fragments, we can compute $\actF$ in time $O(n \log n)$ time.

\subparagraph{Implementation of Step~2.}
To construct the contact fragment graph $\cH$ on the set of fragments $F$ we need to find the intersection points
of fragments in $F$. Because the fragments do not cross but only touch, each intersection point
coincides with an endpoint of a fragment. Hence, we can find the $O(n)$ intersection points 
and create the graph $\cH$ in $O(n \log n)$ time using a standard plane-sweep algorithm~\cite{bcko-cgaa-08}. 
Finding a weighted separator $\sep_{\cH}$ for $\cH$ 
can be done in $O(n)$ time~\cite{planarseparator}.

\subparagraph{Implementation of Step~3.}
In Step~3 we convert the separator $\sep_{\cH}$ for $\cH$ into the  separator~$\sep$ for~$\ig[V]$.
To this end, we first determine in $O(|\sep_{\cH}|)$ time the set $V^*\subseteq V$ of segments
that form the centers of the stars in~$\sep$. 
We then compute the stars of the segments in~$V^*$,
making sure that no segment ends up in more than one star.
Let $V^*_i := V^*\cap V_i$. We go over the sets~$V^*_i$ one by one, and
will compute the stars for the segments in $V^*_i$ as described next. 
\medskip

Consider a set $V_i^*$ and assume wlog that the segments in $V^*_i$ are horizontal.
To create the stars of the segments $v\in V^*_i$, we go over the sets $V_j$ for $j\neq i$ 
one by one. Here we ignore segments $u\in V_j$ that have already been added to
the star of a segment in $V^*_{<i}$.
To handle $V_j$, we move a horizontal sweep line from top to bottom over the plane,
maintaining (a subset of) the segments from $V_j$ that intersect the sweep line.
These segments are stored in a balanced binary tree~$\T$, ordered from left to right. 
Note that the order of the segments stored in $\T$ does not change, since all are parallel. 
When we encounter the top endpoint of a segment $u\in V_j$ we add $u$ to~$\T$. When we
encounter a segment~$v\in V^*_i$, we search in~$\T$ to find all segments in~$\T$
that intersect~$v$, we add them to $\mystar(v)$ and we delete them from~$\T$.
When we encounter the bottom endpoint of a segment $u\in V_j$ we delete $u$ from~$\T$
(if it is still present). The plane sweep takes $O((|V^*_i|+|V_j|)\log (|V^*_i|+|V_j|))$
time in total.

After handling each of the sets $V_j$, we have computed the stars of the segments in~$V^*_i$.
The total time for this is $O(\sum_{j\neq i} ((|V^*_i|+|V_j|)\log (|V^*_i|+|V_j|)) )= O(n\log n)$.
Since we have~$c=O(1)$ sets $V^*_i$, the total time to compute $\sep$ is $O(n\log n)$.
Note that the parts~$A,B\subseteq V$ of the partition induced by $\sep$
can be determined in $O(n)$ time from the partition of $\cH$ given by $\sep_{\cH}$.
\medskip

We conclude that for a $c$-oriented set of $n$ line segments in general position,
a separator with the properties of Theorem~\ref{thm: seperator for pseudo segments}
can be computed in~$O(n\log n)$ time.

\subsection{$c$-Oriented polygons}
We now prove the runtime claim from Theorem \ref{thm:polygons}. 
Let $\cP=\{P_1,\ldots,P_n\}$ be a collection of constant-complexity $c$-oriented polygons, 
possibly with holes, where $c$ is a fixed constant. We assume
wlog that none of the $c$ orientations of the polygon edges is vertical. 
We also assume that the polygons are in general position, leaving the treatment
of degenerate cases to Section~\ref{subsec:algo-degnerate}.

Recall that the algorithm described in Section~\ref{sec:polygons-and-strings} 
computes a star-based separator $S$ for~$\ig[\cP]$ as follows.
\begin{enumerate}
\item \label{step:pol1} 
      Compute the connecting and containment segments for the polygons $P_i\in\cP$.
\item \label{step:pol2}
      Compute a star-based separator $S_V$ for the (weighted) set $V$ of
      polygon sides and connecting and containment segments.
\item \label{step:pol3}
      Based on the separator $S_V$, determine a suitable collection $\cP^*$ of $O(\sqrt{n})$
      polygons $P_i$ and put the corresponding stars $\mystar(P_i)$ into~$S$,
      making sure that the stars are disjoint.
\end{enumerate}
Above we already described how to implement Step~\ref{step:pol2} in $O(n\log n)$
time,
so it remains to discuss Steps~\ref{step:pol1} and~\ref{step:pol3}. Note that for step 2 it is required that no 2 segments overlap. We will deal with overlapping sides of polygons in Section~\ref{subsec:algo-degnerate}. In this section we assume no sides of polygons overlap and that all endpoints of sides have unique $x$-coordinates.  

\subparagraph{Computing connecting and containment segments.}
Consider a polygon $P_i\in \cP$. Recall that the connecting segments of $P_i$
are vertical segments that connect a topmost vertex $x_H$ of each hole $H$ of $P_i$ to the 
edge of~$P_i$ immediately above~$x$. Thus, the connecting segments of $P_i$
can be computed in $O(|P_i|\log |P_i|)$ time with straightforward plane-sweep algorithm.
The total time to compute all connecting segments is therefore $O(n\log n)$.
\medskip

To construct the containment segments, we proceed as follows. Pick a point 
$x_i$ in each polygon~$P_i$ and define $X := \{x_i : P_i \in \cP\}$. 
The points $x_i$ can be chosen arbitrarily, as long as they have distinct $x$-coordinates
(so that we will not generate overlapping containment segments). As before, 
let $\C'_i$ be the set of polygons containing the point~$x_i$, 
and let $\rho_i$ be the ray starting at $x_i$ and going vertically upward. Recall that for $P_j\in \C'_i$, 
the point $y_j$ is the first point where $\rho_i$ exits~$P_j$, and that the containment segment for $P_i$ 
is $x_i x'_i$, where $x'_i$ is the highest point in the set $\{y_j : P_j \in \C'_i\}$. 
Next we describe how to compute the points $x'_i$. 

First, we compute the vertical decomposition $\vd(P_i)$ of each $P_i$, thus decomposing $P_i$ into trapezoids 
with two vertical sides (one of which can be of zero length). Let $\Gamma$ be the set of trapezoids 
created over all polygons $P_i\in\cP$. We partition $\Gamma$ into $c^2$ classes 
$\Gamma_1,\ldots, \Gamma_{c^2}$ according to the orientations of their bottom and top sides. 
In other words, two trapezoids are in the same class  iff their bottom sides are parallel to each other 
and their top sides are parallel to each other. 
For a point $x_i\in X$ and a  class $\Gamma_k$, let $\Gamma_k(x_i)$ 
be the set of trapezoids in $\Gamma_k$ that contain $x_i$ in their interior. 
To compute the points $x'_i$ it suffices to determine, for each class $\Gamma_k$ and each $x_i\in X$, 
the highest point where $\rho_i$ exits a trapezoid in $\Gamma_k(x_i)$; the point $x'_i$ is 
then the highest such point among all $c^2$ classes. 
\medskip

To handle a class $\Gamma_k$, we use a plane-sweep algorithm that sweeps a vertical line~$\ell$ 
from left to right over the plane. During the sweep, we maintain the set $\Gamma_k(\ell)$
of trapezoids $\Delta\in\Gamma_k$ that intersect~$\ell$. The set  $\Gamma_k(\ell)$ is stored as follows.

Consider a coordinate system with one axis that is orthogonal to the bottom sides of the 
trapezoids in $\Gamma_k$ and one axis that is orthogonal to the top sides.
We call the former axis the $\alpha$-axis and the latter axis the $\beta$-axis,
and we direct both axes upward. 
Let $\alpha(\Delta)$ be the $\alpha$-coordinate of the bottom side of $\Delta$
and let $\beta(\Delta)$ be the $\beta$-coordinate of its top side; see  Figure~\ref{fig:alpha-beta}(i).
\begin{figure}
\begin{center}
\includegraphics{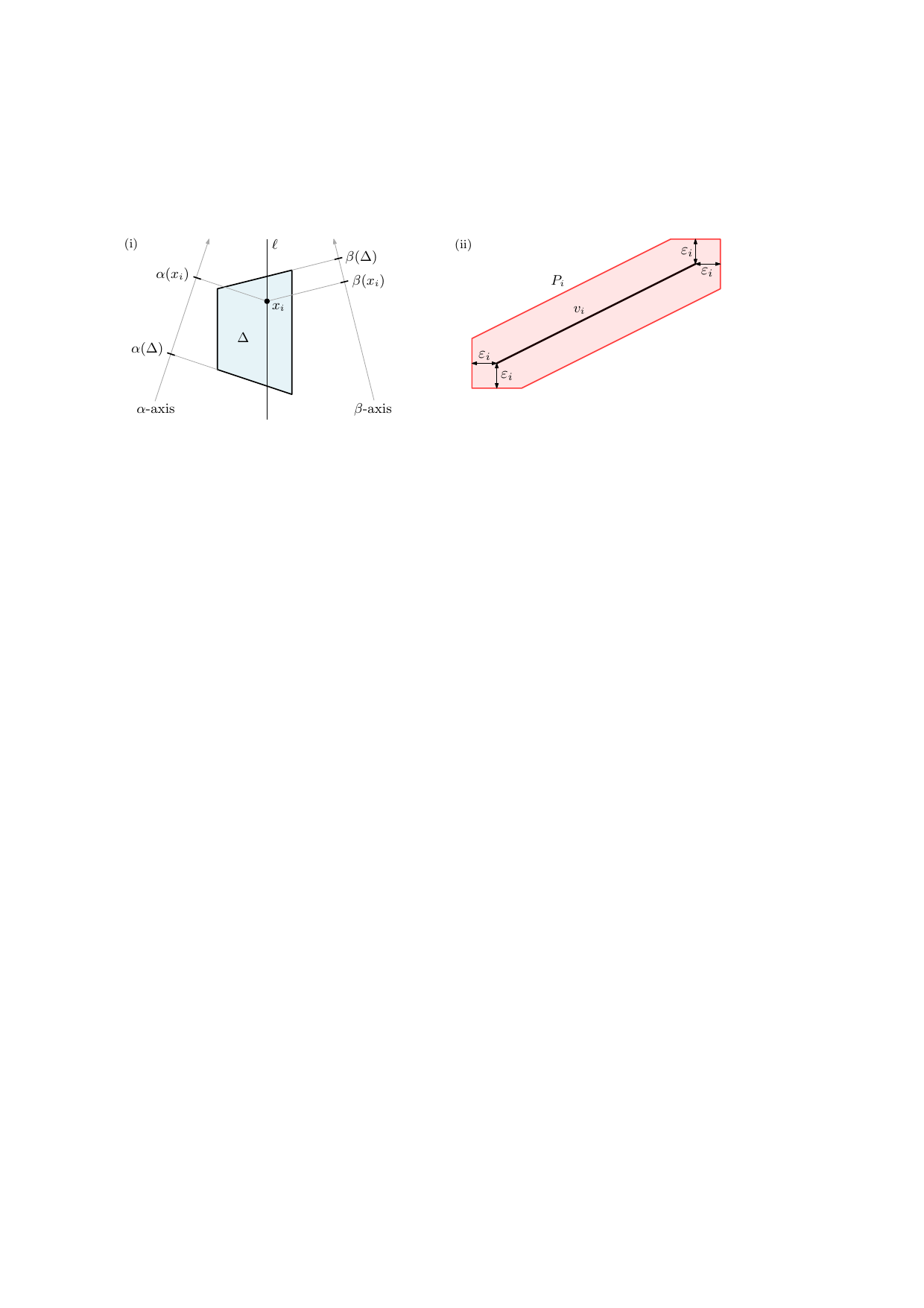}
\end{center}
\caption{(i) The $\alpha$- and $\beta$-coordinates of a trapezoid $\Delta$ and point~$x_i$. (ii) A line segment $v_i$ and its associated polygon $P_i$.}
\label{fig:alpha-beta}
\end{figure}
We maintain the set $\{(\alpha(\Delta),\beta(\Delta)) : \Delta\in\Gamma_k(\ell)\}$ 
in a priority search tree~~$\T$~\cite{bcko-cgaa-08}.
The priority search tree $\T$ allows us to report, for a given query value $\alpha^*$,
in~$O(\log n)$ time the trapezoid~$\Delta\in \Gamma_k(\ell)$ with maximum $\beta$-coordinate
among the trapezoids with $\alpha(\Delta)\leq \alpha^*$. It also allows us to
report all trapezoids with $\alpha(\Delta)\leq \alpha^*$ and $\beta(\Delta)\geq \beta^*$,
for query values~$\alpha^*,\beta^*$. This takes $O(\log n + \mbox{number of reported trapezoids})$ time.
When the sweep line $\ell$ reaches the left side of a trapezoid~$\Delta$
we insert $(\alpha(\Delta),\beta(\Delta))$ into $\T$ and when $\ell$ reaches the 
its right side we delete $(\alpha(\Delta),\beta(\Delta))$ from~$\T$.

The sweep line also halts when we encounter a point $x_i\in X$. Query $T$ with $\alpha(x_i)$ to get trapezoid $\Delta^*$
with maximum $\beta$-coordinate among all 
$\Delta\in\Gamma_k(\ell)$ with $\alpha(\Delta)\leq \alpha(x_i)$, 
where $\alpha(x_i)$ is the $\alpha$-coordinate of~$x_i\in X$.
If $\beta(\Delta^*)\geq \beta(x_i)$ then $\Delta^*$ determines the highest exit point
among the trapezoids in $\Gamma_k$ that contain~$x_i$; otherwise $x_i$ is not contained
in any trapezoid in $\Gamma_k$.

Since insertions, deletions, and queries in priority search tree all take $O(\log n)$ time,
this plane-sweep algorithm runs in $O(n_k\log n_k)$ time, where $n_k := |\Gamma_k|+|X|$.
Thus the total time to handle all sets $\Gamma_k$ is $\sum_{k=1}^{c^2} O(n_k\log n_k) = O(n\log n)$.

\subparagraph{Computing the stars of the separator.}
Recall that Step~\ref{step:pol2} of our algorithm provides us with a collection $\cP^*$
of polygons that will form the centers of the stars in the separator~$S$.
To compute the stars themselves, we need to compute for each $P_j\in\cP\setminus \cP^*$ 
one center $P_i\in\cP^*$ that intersects $P_j$, if such a center exists, and then
add $P_j$ to $\mystar(P_i)$. We do this as follows.

We start by computing the vertical decomposition $\vd(P_i)$ of each polygon $P_i\in \cP$.
A polygon $P_j$ intersects a polygon $P_i$ iff a trapezoid $\Delta\in \vd(P_j)$ intersects
a trapezoid~$\Delta'\in\vd(P_j)$, which happens when at least one one of the following 
conditions is satisfied: an edge of $\Delta$ intersects an edge of $\Delta'$,
or a vertex of $\Delta$ is contained in $\Delta'$, or $\Delta$ contains a vertex of~$\Delta'$.
We treat each case separately. In each of the cases, as soon as we discover that
a polygon $P_j\in \cP\setminus \cP^*$ intersects a polygon $P_i\in\cP^*$, we 
ignore the trapezoids of $\vd(P_j)$ from that moment on. (This may involve deleting 
these trapezoids from the relevant data structures, but this will not increase 
the overall time bound.)
\begin{itemize}
\item Determining for each trapezoid~$\Delta$ of some polygon $P_i\in\cP^* \setminus \cP^*$ if one
      of its edges intersects an edge of a trapezoid~$\Delta'$ of some $P_j\in\cP^*$,
      and reporting the corresponding $P_j$, can be done in the same way as we constructed the stars in the case of  $c$-oriented segments; see the end of Section~\ref{subsec:c-oriented-segments}. 
\item Determining for each trapezoid~$\Delta$ of some polygon $P_i\in\cP^* \setminus \cP^*$ if one
      of its vertices is contained in a trapezoid~$\Delta'$ of some $P_j\in\cP^*$,
      and reporting the corresponding $P_j$, can be done with a plane-sweep
      algorithm that is similar to the one used to compute the containment segments:
      we maintain the trapezoids belonging to polygons in $\cP^*$ intersecting the sweep line, and when we reach a vertex~$x$
      of a trapezoid~$\Delta$ of some polygon in $\cP^* \setminus \cP^*$, we use the priority search tree
      to determine a trapezoid belonging to some $P_j\in\cP^*$ containing~$x$, if it exists.
\item  Determining for each trapezoid~$\Delta$ of some polygon $P_j\in\cP^* \setminus \cP^*$ if
      it contains a vertex $x$ of a trapezoid~$\Delta'$ of some $P_j\in\cP^* $,
      can be done with a plane-sweep algorithm as well. This time, when we encounter a vertex~$x$,
      we report all trapezoids $\Delta$ that contain~$x$. (As already mentioned,
      these trapezoids now need to be deleted form the priority search tree.)
\end{itemize}
The total running time of our algorithm to compute the stars is easily seen to be~$O(n\log n)$.
This finishes the proof that the separator described in Theorem~\ref{thm:polygons} 
can indeed be computed in $O(n \log n)$ time, assuming general position assumptions.

\subsection{Handling degenerate cases}
\label{subsec:algo-degnerate}
In this section, we show that computing a separator for $c$-oriented line segments 
can even be done in $O(n \log n)$ time when the segments are not in general position. 
Specifically, we deal with the case of overlapping segments, thus extending the algorithm
from {\sc pure-$c$-dir} graphs (with an explicit representation) to 
{\sc $c$-dir} graphs (with an explicit representation).
This immediately implies that the algorithm for $c$-oriented polygon also works on
degenerate cases, because the general-position assumption was only needed to ensure 
that the set $V$ of polygon sides, connecting segments, and containment segment 
was non-overlapping.

Let $V$ be a set of $c$-oriented line segments, where $|V|=n$. We inflate 
the line segments into thin polygons. Specifically, for each $v_i \in V$ we choose some small 
value~$\varepsilon_i$ and create a polygon $P_i$ containing all the points whose 
$\ell_{\infty}$-distance to $v_i$ is at most $\varepsilon_i$, see Figure~\ref{fig:alpha-beta}(ii). We thus create an at most 
$(c+2)$-oriented set of polygons~$\cP$. We can choose the $\varepsilon_i$ in such a way that the generated set of polygons is in general position and any two polygons $P_i,P_j$ intersect
iff the segments $v_i,v_j$ intersect. This means that $\ig[V]$ is equivalent to $\ig[\cP]$.

To create the polygons, we first calculate the smallest $\ell_{\infty}$-distance,
$d_{\min}$, between two non-intersecting segments. We then choose the $\varepsilon_i$ 
in such a way that that every $\varepsilon_i$ is smaller than $d_{\min}/2$ 
and all values are unique.
The distance $d_{\min}$ can be found by computing, for each segment endpoint~$p$, 
the distance to the segment closest to $p$ for which $p$ is not an endpoint. 
Because the segments can be partitioned into $c$ sets where each sets contains 
only parallel segments, this can be done a plane-sweep algorithm on each of the sets
and then returning the minimum over all sets. The plane-sweep algorithms 
take $O(n \log n)$ time, and since $c$ is constant, the total runtime is $O(n \log n)$.  
(Instead of computing explicit values for the $\varepsilon_i$'s, we can also
do the computations symbolically, as in the Simulation-of-Simplicity 
framework~\cite{simulation_of_symplicity}.) 
After creating the set $\cP$ of inflated polygons, we can use the $O(n \log n)$
algorithm for finding a separator for $c$-oriented polygons in general position. 
Because $\ig[V]$ is equivalent to $\ig[\cP]$, this gives us a separator 
for $\ig[V]$.

\section{Concluding remarks}
\label{sec:concl}
Motivated by the fact that intersection graphs of non-fat objects may not admit sublinear node-based
or clique-based separators, we introduced a \emph{biclique-based} and \emph{star-based} separators. 
We proved that the intersection graph of any $c$-colored set of pseudo-segments has a star-based separator
of size $O(\sqrt{n})$, and extended the result to $c$-oriented polygons. We also presented
a straightforward algorithm to compute a star-based separator of size~$O(n^{2/3} \log ^{2/3} n)$ for any
string graph. These results lead to almost exact distance oracles with subquadratic storage
and sublinear query times. To the best of our knowledge, such distance oracles did not yet exist---not 
even for intersection graphs of axis-parallel line segments.

Our work raises several questions. Can we improve the size of star-based separators
for string graphs from $O\left(n^{{2}/{3}} \log ^{{2}/{3} }n\right)$ to $O(\sqrt{n})$? If not, can we perhaps do so for
$c$-colored sets of strings, or for arbitrary sets of line segments? It is also
interesting to explore other applications of biclique-based separators, besides distance
oracles, and to see if the bounds we obtained for distance oracles can be improved. 
While clique-based separators have been used to design subexponential algorithms for problems 
such as \qcol~\cite{bkmt-cbsgis-23} and \domset~\cite{bbkmz-ethf-20}, it is unlikely 
that our biclique-based separator will yield new results for these problems. 
This is due to existing $2^{\Omega(n)}$ conditional lower bound (under ETH)
for \qcol and \domset on $2$-{\sc dir} and segment intersection graphs, respectively~\cite{BonnetR19}. 
Instead, problems whose main difficulty lies in finding (hop-)distances---computing the diameter in subquadratic time~\cite{DBLP:conf/compgeom/Chang0024} 
is an example---would be interesting to consider.

Recent work on the (weighted) \mis problem on restricted graph classes has exploited properties that can be 
interpreted through the lens of star-based separators~\cite[\S 1.4]{Gartland23}. It is known that for every constant~$t$, the family of 
graphs that does not contain an induced path of length~$t$ admits balanced separators that consist of the neighborhoods of~$t-1$ vertices (cf.~\cite[Thm. 1.2]{GartlandLMPPR24}). 
Hence such graphs have star-based separators of constant size. This property has been used to develop quasi-polynomial-time 
approximation schemes for weighted \mis on~$P_t$-free graphs~\cite{ChudnovskyPPT20,GartlandLMPPR24}, as well as exact subexponential-time algorithms for unweighted \mis on~$P_t$-free graphs~\cite{BacsoLMPTL19}. Can our star-based separators of size $O(\sqrt{n})$ also be used to obtain new algorithms for restricted input families?


\bibliography{references,literature}

\end{document}